\newcounter{mnote}
\def\xmarginnote{%
  \xymarginnote{\hskip -\marginparsep \hskip -\marginparwidth}}
\def\ymarginnote{%
  \xymarginnote{\hskip\columnwidth \hskip\marginparsep}}
\long\def\xymarginnote#1#2{%
\vadjust{#1%
\smash{\hbox{{%
        \hsize\marginparwidth
        \@parboxrestore
        \@marginparreset
\footnotesize #2}}}}}
\def\mnoteson{%
\gdef\mnote##1{\refstepcounter{mnote}\label{##1}%
  \zsavepos{##1}%
  \ifnum20432158>\number\zposx{##1}%
  \xmarginnote{{\color{blue}\bf $\langle$\arabic{mnote}$\rangle$}}%
  \else
  \ymarginnote{{\color{blue}\bf $\langle$\arabic{mnote}$\rangle$}}%
  \fi%
}
  }
\gdef\mnotesoff{\gdef\mnote##1{}}
\tikzstyle{block} = [draw,fill=blue!20,minimum size=2em]
\theoremstyle{plain}
\theoremstyle{nonumberplain}
\theoremstyle{plain}
\let\iftwocolumn\if@twocolumn
\g@addto@macro\@twocolumntrue{\let\iftwocolumn\if@twocolumn}
\g@addto@macro\@twocolumnfalse{\let\iftwocolumn\if@twocolumn}
\let\underbrace\LaTeXunderbrace 
\renewcommand{\eqref}[1]{\textup{(\refeq{#1})}} 
\lstdefinelanguage{Maple}{
  morekeywords={proc,module,end, for,from,to,by,while,in,do,od
    ,if,elif,else,then,fi ,use,try,catch,finally}, sensitive,
  morecomment=[l]\#,
  morestring=[b]",morestring=[b]`}[keywords,comments,strings]
\DeclareMathAlphabet{\mathpzc}{OT1}{pzc}{m}{it}
\def\multi@nostar#1#2{%
  \expandafter\def\csname multi#1\endcsname##1{%
    \if ##1.\let\next=\relax \else
    \def\next{\csname multi#1\endcsname}     
    \expandafter\newcommand\csname #1##1\endcsname{#2}
    \fi\next}}
\def\multi@star#1#2{%
  \expandafter\def\csname #1\endcsname##1{#2}
  \multi@nostar{#1}{#2}
}
\newcommand{\multi}{%
  \@ifstar \multi@star \multi@nostar}
\newcommand{\dotleq}{\buildrel \textstyle  .\over {\smash{\lower
      .2ex\hbox{\ensuremath\leqslant}}\vphantom{=}}}
\newcommand{\dotgeq}{\buildrel \textstyle  .\over {\smash{\lower
      .2ex\hbox{\ensuremath\geqslant}}\vphantom{=}}}
\newcommand{\bM}{\begin{bmatrix}}
\newcommand{\eM}{\end{bmatrix}}
\newcommand{\bSM}{\left[\begin{smallmatrix}}
\newcommand{\eSM}{\end{smallmatrix}\right]}
\renewcommand*\env@matrix[1][*\c@MaxMatrixCols c]{%
  \hskip -\arraycolsep
  \let\@ifnextchar\new@ifnextchar
  \array{#1}}
\newcommand{\sgn}{\operatorname{sgn}}
\DeclareFontFamily{OMX}{MnSymbolE}{}
\DeclareSymbolFont{MnLargeSymbols}{OMX}{MnSymbolE}{m}{n}
\DeclareFontShape{OMX}{MnSymbolE}{m}{n}{
    <-6>  MnSymbolE5
   <6-7>  MnSymbolE6
   <7-8>  MnSymbolE7
   <8-9>  MnSymbolE8
   <9-10> MnSymbolE9
  <10-12> MnSymbolE10
  <12->   MnSymbolE12
}{}
\DeclareFontShape{OMX}{MnSymbolE}{b}{n}{
    <-6>  MnSymbolE-Bold5
   <6-7>  MnSymbolE-Bold6
   <7-8>  MnSymbolE-Bold7
   <8-9>  MnSymbolE-Bold8
   <9-10> MnSymbolE-Bold9
  <10-12> MnSymbolE-Bold10
  <12->   MnSymbolE-Bold12
}{}
\let\llangle\@undefined
\let\rrangle\@undefined
\DeclareMathDelimiter{\llangle}{\mathopen} {MnLargeSymbols}{'164}{MnLargeSymbols}{'164}
\DeclareMathDelimiter{\rrangle}{\mathclose} {MnLargeSymbols}{'171}{MnLargeSymbols}{'171}
\newcommand{\imod}[1]{\allowbreak\mkern10mu({\operator@font mod}\,\,#1)}
\newcommand{\threecols}[3]{
\hbox to \textwidth{%
      \normalfont\rlap{\parbox[b]{\textwidth}{\raggedright#1\strut}}%
        \hss\parbox[b]{\textwidth}{\centering#2\strut}\hss
        \llap{\parbox[b]{\textwidth}{\raggedleft#3\strut}}%
    }
}
\newcommand{\reason}[2][\relax]{
  \ifthenelse{\equal{#1}{\relax}}{
    \left(\text{#2}\right)
  }{
    \left(\parbox{#1}{\raggedright #2}\right)
  }
}
\newcommand{\utag}[2]{\mathop{#2}\limits^{\text{(#1)}}}
\newcommand{\uref}[1]{(#1)}
\let\SavedDoubleVert\relax
\let\protect\relax
  \xdef\extendvert{\protect\expandafter\noexpand\csname extendvert \endcsname}
\gdef\csname extendvert \endcsname#1{\mskip-5mu \left.%
      \ifx\SavedDoubleVert\relax \let\SavedDoubleVert\|\fi
     \:{\let\|\SetDoubleVert
       \mathcode`\|32768\let|\SetVert
     #1}\:\right.\mskip-5mu}
\def\SetVert{\@ifnextchar|{\|\@gobble}
    {\egroup\;\mid@vertical\;\bgroup}}
\def\SetDoubleVert{\egroup\;\mid@dblvertical\;\bgroup}
 \edef\@tempa{\meaning\middle}
 \edef\@tempb{\string\middle}
 \def\mid@vertical{\middle|}
 \def\mid@dblvertical{\middle\SavedDoubleVert}
 \def\mid@vertical{\mskip1mu\vrule\mskip1mu}
 \def\mid@dblvertical{\mskip1mu\vrule\mskip2.5mu\vrule\mskip1mu}
\newenvironment{ybox}{
	\setlength{\FrameSep}{1.5mm}
	\setlength{\FrameRule}{0mm}
  \MakeFramed {\FrameRestore}}%
{\endMakeFramed}
\newenvironment{gbox}{
	\setlength{\FrameSep}{1.5mm}
\setlength{\FrameRule}{0mm}
  \MakeFramed {\FrameRestore}}%
{\endMakeFramed}
\let\theparentequation\theequation
\patchcmd{\theparentequation}{equation}{parentequation}{}{}
\renewenvironment{subequations}[1][]{
	\refstepcounter{equation}%
	\setcounter{parentequation}{\value{equation}}
	\setcounter{equation}{0}
	\def\theequation{\theparentequation\alph{equation}}%
	\let\parentlabel\label
	\ifx\\#1\\\relax\else\label{#1}\fi
	\ignorespaces
}{%
	\setcounter{equation}{\value{parentequation}}
	\ignorespacesafterend
}
\newcommand*{\nextParentEquation}[1][]{
	\refstepcounter{parentequation}
	\setcounter{equation}{0}
	\ifx\\#1\\\relax\else\parentlabel{#1}\fi
}
\DeclareMathOperator{\rank}{rank}
\newcommand{\wskc}{C_{\op{W}}}
\newcommand{\skc}{C_{\op{S}}}
\newcommand{\pkc}{C_{\op{P}}}
\newcommand{\rco}{R_{\op{CO}}}
\newcommand{\rl}{R_{\op{L}}}
\newcommand{\Fq}{\mathbb{F}_q}
\title{Secret Key Agreement and Secure Omniscience  \\of  Tree-PIN Source with Linear Wiretapper}
\author{Praneeth Kumar Vippathalla, Chung Chan, Navin Kashyap and Qiaoqiao Zhou
	\thanks{C.\ Chan (email: chung.chan@cityu.edu.hk) is with the Department of Computer Science, City University of Hong Kong. His work is supported by a grant from the University Grants Committee of the Hong Kong Special Administrative Region, China (Project No. 21203318).}
    \thanks{Q.\ Zhou (email: zq115@ie.cuhk.edu.hk) is with the Institute of Network Coding and the Department of Information Engineering, The Chinese University of Hong Kong.}
	\thanks{N.\ Kashyap (nkashyap@iisc.ac.in) and Praneeth Kumar V.\ (praneethv@iisc.ac.in) are with the Department of Electrical Communication Engineering, Indian Institute of Science, Bangalore 560012. Their work was supported in part by a Swarnajayanti Fellowship awarded to N.\ Kashyap by the Department of Science \& Technology (DST), Government of India.}
	}
\begin{document}
\newif\ifPAGELIMIT
\PAGELIMITfalse
\IEEEoverridecommandlockouts

\maketitle
\begin{abstract}
  While the \emph{wiretap secret key capacity} remains unknown for general source models even in the two-user case, we obtained a single-letter characterization for a large class of multi-user source models with a \emph{linear wiretapper} who can observe any linear combinations of the source. We introduced the idea of \emph{irreducible sources} to show existence of an optimal communication scheme that achieves perfect omniscience with minimum leakage of information to the wiretapper. This implies a duality between the problems of wiretap secret key agreement and \emph{secure omniscience}, and such duality potentially holds for more general sources.

\end{abstract} 

\section{Introduction} \label{sec:introduction}
The problem of multiterminal secret key agreement was studied by Csisz\'{a}r and Narayan in \cite{csiszar04}. They derived the single-letter expression for the secret key capacity $\skc$ when the wiretapper has no side information. Remarkably, they established a duality between the problem of  secret key agreement and the problem of communication for omniscience by the users, which means that attaining omniscience by users is enough to extract a secret key of maximum rate. However, the characterization of secret key capacity when the wiretapper has side information $\wskc$ was left open, and only gave some upper bounds on it. Later Gohari and Anantharam, in \cite{aminsource},  provided strengthened upper bounds and lower bounds. Furthermore, they proved a duality between secret key agreement with wiretapper side information and the problem of communication for omniscience by a neutral observer, where the neutral observer attains omniscience instead of the users. But this equivalence does not give an exact single-letter characterization of $\wskc$. Nevertheless in some special cases, it is known exactly. In particular,
\cite{alireza19} studied  a pairwise independent network (PIN) source model defined on trees  with wiretapper side information obtained by passing the edge random variables through independent channels. For this model, $\wskc$ was characterized using the conditional minimum rate of communication for omniscience characterization given in \cite{csiszar04}, and provided a scheme that achieves it. The final form of $\wskc$ is similar to that of $\skc$ except for the conditioning with respect to wiretap side information. One can see that the linear wiretapper case is not covered by this model. 

Recently in \cite{chan20secure}, Chan et al. have studied the problem of secure omniscience in the context of multiterminal  secure information exchange, and  explored its duality connection to the problem of wiretap secret key agreement. In the secure omniscience problem, every user tries to attain omniscience by communicating interactively using their private observations from a correlated source, however, with a goal to minimize the information leakage to the wiretapper who has side information about the source. Interestingly, in the case of finite linear source (FLS) involving two active users and a wiretapper, they provided an explicit characterization of the wiretap secret key capacity and the minimum leakage rate for omniscience $\rl$. In fact, the achievable communication scheme for wiretap secret key capacity involves secure omniscience. Motivated by this result, they conjectured that such a duality holds for the entire class of FLS. In this  paper, we address this question and completely resolves it in the subclass of tree-PIN model but with a linear wiretapper, which is the most general wiretapper in the class of FLS. 

The PIN sources have received a wide attention in the secret key agreement problem without wiretapper side information, see \cite{sirinpin,chan19,qiao20}.  The main motivation for studying PIN sources is that they model the problem of generating a global key out of locally generated keys by user pairs. In the study of general PIN sources, the subclass of tree-PIN sources play an important role. For the tree-PIN \cite{sirinpin}, secret key capacity is achieved by using a linear and non-interactive communication scheme that  propagates a key across the tree. This protocol indeed serves as a building block in the tree-packing protocol for the  general PIN. It was proved in \cite{chan19} that the tree-packing protocol is even optimal for the constrained secrecy capacity $\skc(R)$ where $R$ is the total discussion rate. The optimality was shown by deriving a matching converse bound. Recently, \cite{qiao20} identified a large class of PIN models where the tree-packing protocol achieves the entire rate region where each point is a  tuple of achievable  key rate and individual discussion rates.

A problem that is closely related to secure omniscience is the coded cooperative data exchange (CCDE) problem with secrecy constraint, see for e.g., \cite{sprinston13, courtade16}. The problem of CCDE considers a hypergraphical source and studies the one-shot omniscience. The hypergraphical model generalizes the PIN model within the class of FLS. \cite{courtade16} studied the secret key agreement in the CCDE context and characterized the number of transmissions required versus the number of SKs generated. Whereas \cite{sprinston13} considered the same model but with wiretapper side information and explored the leakage aspect of an omniscience protocol. However, the security notion considered therein does not allow the eavesdropper to recover even one hyperedge (data packet) of the source from the communication except what is already available. But the communication scheme can still reveal information about the source. In this paper we are interested to minimize the leakage of the total information to the wiretapper. Though we consider the asymptotic notion, the  designed optimal communication scheme uses only finite number of realizations of the source. Hence this scheme can find application even  in CCDE problems.
 
In this paper, for a tree-PIN with linear wiretapper, we exactly characterize $\rl$ and $\wskc$ by giving an optimal linear (non-interactive) communication scheme. To do this, we first  reduce the source to an irreducible source and then we give a communication protocol that achieves both perfect omniscience, a notion that was introduced in  \cite{sirinperfect}, and perfect alignment with wiretapper. In perfect omniscience, terminals recover the source perfectly using only a finite number of source realizations. Moreover, perfect alignment means that the wiretapper observations can be completely recovered from the communication alone. Ideally, it should be the other way around - the communication should be completely recoverable from the eavesdropper's observations, so that the eavesdropper learns nothing new about the source. However, it may not always be possible to design a communication for omniscience satisfying this requirement. So, we ask for the next best thing, that a large part of the communication contains information already known to the eavesdropper.

The paper is organized as follows. We introduce the problem and notations in section \ref{sec:problem}. Section \ref{sec:results} contains the main results whereas the proofs are presented in section \ref{sec:proof}. Section \ref{sec:prot} focuses on an explicit secure omniscience protocol. Finally, we conclude with possible future directions and open problems in section \ref{sec:conc}.

\section{Problem formulation}\label{sec:problem}
In this section, we describe two different scenarios in the context of multiterminal setting where  the terminals communicate publicly using their correlated observations to perform a task securely from the eavesdropper, who has access to the public communication along with side information.  More precisely, let $V=[m]:=\left\lbrace1, \ldots, m\right\rbrace$ be the set of users  and $\opw$ denotes the wiretapper.  Let  $\RZ_1,\ldots \RZ_m$ and $\RZ_{\opw}$ be the random variables  taking values in finite alphabets $\mc{Z}_1,\ldots \mc{Z}_m$ and $\mc{Z}_{\opw}$ respectively, and their joint distribution is given by $P_{\RZ_1 \ldots \RZ_m \RZ_{\opw}}$. Let $\RZ_V := (\RZ_i: i \in V)$ and $\RZ_i^n$ denote the $n$ i.i.d. realizations  of $\RZ_i$.  Each user has access to the corresponding random variable. Upon observing $n$ i.i.d. realizations, the terminals communicate interactively using their observations and possibly independent private randomness on the noiseless and authenticated channel. In other words, the communication made by an user in any round depends on all the previous rounds communication and  user's observations.  Let $\RF^{(n)}$ denotes this interactive communication. We say $\RF^{(n)}$ is \emph{non-interactive}, if it is of the form $(\tRF_i^{(n)}: i \in V)$, where $\tRF_i^{(n)}$ depends on only on $\RZ_i^n$ and the  private randomness of user $i$. Note that the eavesdropper has access to the pair $(\RF^{(n)}, \RZ_{\opw}^n)$. At the end of the communication, users output a value in a finite set using their observations and $\RF^{(n)}$. For example, user $i$ outputs $\RE_i^{(n)}$ using $(\RF^{(n)}, \RZ_i^n)$ and its private randomness. See Fig.~\ref{fig:system}.
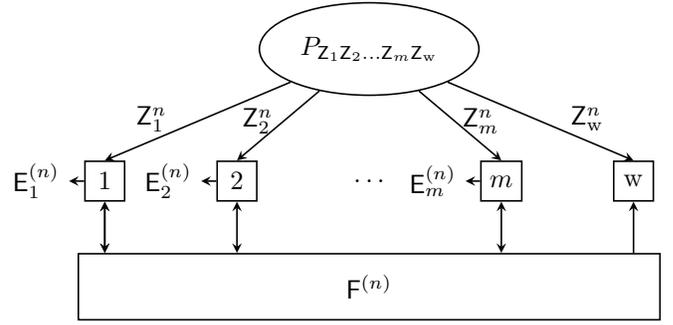
\begin{figure}
\centering
\begin{tikzpicture}[>=stealth,semithick, node distance = 5 em]
\tikzstyle{source}=[ellipse, draw, semithick, minimum height=3.5 em, minimum width=7.5 em];
\tikzstyle{user}=[rectangle, draw, semithick, minimum height=1.5 em, minimum width=1.5 em];
\tikzstyle{chnl}=[rectangle, draw, semithick, minimum height=2.5 em, minimum width=22 em];
\node   (dots)   {$\ldots$};
\node[source]      (source)       [above of = dots] {$P_{\RZ_1\RZ_2\ldots \RZ_m\RZ_{\opw}}$};
\node[user]      (2)       [left of = dots] {$2$};
\node[user]      (1)       [left of = 2] {$1$};
\node[user]      (m)       [right of = dots] {$m$};
\node[user]      (w)       [right of = m] {${\opw}$};
\node[chnl]      (chnl)       [below of =dots, yshift=1 em] {$\RF^{(n)}$};
\node (k1) [left of = 1, xshift=2.4 em] {$\RE_1^{(n)}$};
\node (k2) [left of = 2, xshift=2.4 em] {$\RE_2^{(n)}$};
\node (km) [left of = m, xshift=2.4 em] {$\RE_m^{(n)}$};
\draw [->] (source) edge node[above,near end] {$\RZ_1^n$} (1.north);
\draw [->] (source) edge node[above,near end]{$\RZ_2^n$}(2.north);
\draw [->] (source) edge node[above,near end]{$\RZ_m^n$} (m.north);
\draw [->] (source) edge node[above,near end]{$\RZ_{\opw}^n$} (w.north);
\draw [<->]  (1.south)--(1|-chnl.north);
\draw [<->]  (2.south)--(2|-chnl.north);
\draw [<->]  (m.south)--(m|-chnl.north);
\draw [->]  (w|-chnl.north)--(w.south);
\draw [<->]  (1.south)--(1|-chnl.north);
\draw [->]  (1.west)--(k1.east);
\draw [->]  (2.west)--(k2.east);
\draw [->]  (m.west)--(km.east);
\end{tikzpicture}
\caption{Multiterminal source model with wiretapper side information. The terminals interactively discuss over a public channel using their observations from a correlated source to agree upon a common randomness which must be kept secure from the wiretapper.}
\label{fig:system}
 \end{figure}
\subsection{Secure Omniscience}
In the secure omniscience scenario, each user tries to recover the observations of the other users except wiretapper's. We say that $(\RF^{(n)}, \RE_1^{(n)}, \ldots, \RE_m^{(n)})_ {n \geq 1}$  is an omniscience scheme if it satisfies the recoverability condition for omniscience
\begin{align}\label{eq:omn:recoverability}
\liminf_{n \to \infty} \Pr(\RE_1^{(n)} = \ldots =\RE_m^{(n)} = \RZ_V^n) = 1.
\end{align}
The minimum leakage rate for omniscience is defined as 
\begin{align}
\begin{split}
 \rl&:= \inf  \biggl\lbrace \limsup_{n \to \infty} \frac{1}{n}I(\RF^{(n)} \wedge \RZ_V^n|\RZ_{\opw}^n) \biggr\rbrace \label{eq:rl}
 \end{split}
\end{align}
where the infimum is over all omniscience schemes. We sometimes use $\rl(\RZ_V||\RZ_{\opw})$ instead of $\rl$ to make the source explicit. When there is no wiretapper side information, then the above notion coincides with the minimum rate of communication for omniscience, $\rco$ \cite{csiszar04}. And the conditional minimum rate of communication for omniscience, $\rco(\RZ_V|\RJ)$, is used  in the case when all the users have the shared randomness $\RJ^n$ along with  their private observations. This means that user $i$ observes $(\RJ^n, \RZ_i^n)$. 

\subsection{Secret Key Agreement}
In the secure secret key agreement, each user tries to recover a common randomness that is kept secure from the wiretapper. Specifically, we say that $(\RF^{(n)}, \RE_1^{(n)}, \ldots, \RE_m^{(n)})_ {n \geq 1}$  is a secret key agreement (SKA) scheme if there exists a sequence $(\RK^{(n)})_{n \geq 1}$  such that
\begin{subequations}
\label{eq:sk:constraints}
\begin{align}
\liminf_{n \to \infty} \Pr(\RE_1^{(n)} = \ldots =\RE_m^{(n)} = \RK^n) = 1 \label{eq:sk:recoverability},\\
\limsup_{n \to \infty}\left[\log |\mc{K}^{(n)}| - I(\RK^{(n)}\wedge \RF^{(n)},\RZ_{\opw}^n)\right] =0\label{eq:sk:secrecy},
\end{align}
where \eqref{eq:sk:recoverability} is the key recoverability condition and \eqref{eq:sk:secrecy} is the secrecy condition of the key and $|\mc{K}^{(n)}|$ denotes the cardinality of the range of $\RK^{(n)}$.
\end{subequations}
The wiretap secret key capacity  is defined as 
\begin{align}
 \wskc:= \sup \left\lbrace \liminf_{n \to \infty} \frac{1}{n} \log |\mc{K}^{(n)}| \label{eq:wskc}\right\rbrace
\end{align}
where the supremum is over all SKA schemes. The quantity $\wskc$ is also sometimes written as $\wskc(\RZ_V||\RZ_{\opw})$. In \eqref{eq:wskc}, we use $\skc$ instead of $\wskc$, when the wiretap side information is set to a constant.   Similarly, we use $\pkc(\RZ_V| \RJ)$  in the case when wiretap side information is  $\RZ_{\opw}= \RJ$ and all the users have the shared random variable $\RJ$ along with  their private observations $\RZ_i$. The quantities $\skc$ and $\pkc(\RZ_V|\RJ)$ are referred to  as secret key capacity of  $\RZ_V$ and private key capacity of $\RZ_V$ with compromised-helper side information $\RJ$ respectively.
\subsection{Tree PIN source with linear wirtapper}
 A source $\RZ_V$ is said to be \emph{Tree-PIN} if there exists a tree $T=(V,E,\xi)$ and for each edge $e \in E$, there is a non-negative integer $n_e$ and a random vector $\RY_e = \left( \RX_{e,1}, \ldots, \RX_{e,n_e} \right)$. We assume that the collection of random variables $\RX :=(\RX_{e,k}: e\in E, k \in [n_e])$ are i.i.d. and each component is  uniformly distributed over a finite field, say $\Fq$. For $i \in V$,
 \begin{align*}
  \RZ_i = \left( \RY_e : i \in \xi (e) \right) .
 \end{align*} 

 The linear wiretapper's side information $\RZ_{\opw}$ is defined as 
\begin{align*}
 \RZ_{\opw} = \RX \MW,
\end{align*}
where $\RX$ is a $1 \times (\sum_{e \in E}n_e)$ vector and $\MW$ is a $(\sum_{e \in E}n_e) \times n_w$ full column-rank matrix over $\Fq$. We sometimes refer to $\RX$ as the base vector. We refer to the pair $(\RZ_V, \RZ_{\opw})$ defined as above as the \emph{Tree-PIN source with linear wiretapper}. This is a special case of finite linear sources \cite{chan10} where both $\RZ_V$ and $\RZ_{\opw}$ can be written as $\RX\MM$ and $\RX\MW$ respectively for some matrices $\MM$ and $\MW$. In the context of FLS, we say a communication scheme $\RF^{(n)}$ is \emph{linear}, if each user's communication is a linear function of its observations and the previous communication on the channel. Without loss of generality, linear communication can also be assumed to be non-interactive.  In the rest of the paper, we consider only matrices over $\Fq$ unless otherwise specified.
\subsection{Motivating example} The following example of a tree-PIN source with linear wiretapper appeared in our earlier work \cite{chan20secure}, where we constructed an optimal secure omniscience scheme. Let $V=\{1,2,3,4\}$ and  
     \begin{align}
        \RZ_{\opw} &= \RX_a+\RX_b+\RX_c, \\
        \RZ_1 &= \RX_a, \RZ_2 = (\RX_a, \RX_b), \RZ_3 = (\RX_b, \RX_c), \RZ_4 =  \RX_c,
      \end{align}
  where $\RX_a$, $\RX_b$ and $\RX_c$ are uniformly random and independent bits. The tree here is a path of  length $3$ (Fig.~\ref{fig:exampletree}) and the wiretapper observes the linear combination of all the edge random variables. For secure omniscience, terminals 2 and 3, using $n=2$ i.i.d. realizations of the source, communicate linear combinations of their observations. The communication is of the form, $\RF^{(2)} =(\tRF_2^{(2)},\tRF_3^{(2)})$, where  $\tRF_2^{(2)} =\RX^2_a+\MM \RX^2_b$ and $\tRF_3^{(2)}=(\MM + \MI) \RX_b^2 +\RX_c^2$ with $\MM:=\bM 1 & 1\\ 1 & 0\eM$.  Since the matrices $\MM$ and $\MM+\MI$ are invertible, all the terminals can recover $\RZ_V^2$ using this communication. For example, user 1 can first  recover $\RX_b^2$ from $(\RX_a^2, \tRF_2^{(2)})$ as $\RX_b^2 = (\MM+\MI)(\RX_a^2+ \tRF_2^{(2)})$, then $\RX_b^2$ can be used along with $\tRF_3^{(2)}$ to recover $\RX_c^2$ as $\RX_c^2 = (\MM+\MI)\RX_b^2+ \tRF_3^{(2)}$.  More interestingly, this communication is aligned with the eavesdropper's observations, since $\RZ^2_{\opw} = \tRF_2^{(2)}+\tRF_3^{(2)}$. 
  
  For minimizing leakage, this kind of alignment must happen. For example, if $\RZ^2_{\opw}$ were not contained in the span of $\tRF_2^{(2)}$ and $\tRF_3^{(2)}$, then the wiretapper could infer a lot more from the communication.  Ideally if one wants zero leakage, then $\RF^{(n)}$ must be within the span of $\RZ^n_{\opw}$, which is not feasible in many cases because with that condition, the communication might not achieve omniscience in the first place. Therefore keeping this in mind, it is reasonable to assume that there can be components of $\RF^{(n)}$ outside the span of $\RZ^n_{\opw}$. And we look for communication schemes which span as much of $\RZ_{\opw}$ as possible. Such an alignment condition is used to control the leakage. In this particular example, it turned out that an omniscience communication that achieves $\rco$ can be made to completely align with the wiretapper side information.  With the motivation from this example, we in fact showed that such an alignment phenomenon holds true in the entire class of tree-PIN with linear wiretapper.

\section{Main results}\label{sec:results}
The following two propositions give upper and lower bounds on minimum leakage rate for a general source $(\RZ_V,\RZ_{\opw})$. The lower bound on $\rl$ in terms of wiretap secret key capacity is obtained by using the idea of  privacy amplification on the recovered source.
While the multi-letter upper bound is given in terms of any communication made using first $n$ i.i.d. realizations.

\begin{proposition}[\cite{chan20secure}, Theorem 1]\label{thm:RL:lb}
    For the secure omniscience scenario with $|V|\geq 2$,
      \begin{align}
       \rl &\geq H(\RZ_V|\RZ_{\opw}) - \wskc.\label{eq:RL:lb}
        \end{align}
\end{proposition}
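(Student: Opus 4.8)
The plan is to convert any secure omniscience scheme into a secret key agreement (SKA) scheme by \emph{privacy-amplifying} the commonly recovered source $\RZ_V^n$: the key so obtained has rate $H(\RZ_V|\RZ_{\opw})$ minus the leakage rate of the omniscience scheme, and since no key rate can exceed $\wskc$, rearranging yields \eqref{eq:RL:lb}. This mirrors the way one proves $\skc\ge H(\RZ_V)-\rco$ in \cite{csiszar04} (``omniscience, then privacy amplification''), now carried out in the presence of wiretapper side information.

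First I would fix $\epsilon>0$ and take an omniscience scheme $(\RF^{(n)},\RE_1^{(n)},\dots,\RE_m^{(n)})_{n\ge1}$ with $\limsup_{n}\frac1n I(\RF^{(n)}\wedge\RZ_V^n|\RZ_{\opw}^n)\le\rl+\epsilon$; by \eqref{eq:omn:recoverability} the probability $p_n$ that not every user recovers $\RZ_V^n$ tends to $0$. On top of this scheme, let one user draw a hash $g_n\colon\mc{Z}_V^n\to\mc{K}^{(n)}$ uniformly from a $2$-universal family and broadcast it (this only reveals data the eavesdropper is anyway allowed to see); set $\RK^{(n)}:=g_n(\RZ_V^n)$ and let each user output $g_n$ applied to its estimate of $\RZ_V^n$. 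Key recoverability \eqref{eq:sk:recoverability} is then inherited from \eqref{eq:omn:recoverability}. By the leftover hash lemma against side information, taking $\log|\mc{K}^{(n)}|$ just below $H_{\min}^{\delta}(\RZ_V^n|\RF^{(n)},\RZ_{\opw}^n)-2\log(1/\delta)$ makes $\RK^{(n)}$ within $O(\delta)$ total variation of uniform on $\mc{K}^{(n)}$ and independent of the eavesdropper's view $(\RF^{(n)},\RZ_{\opw}^n)$, giving the secrecy condition \eqref{eq:sk:secrecy} for a suitable $\delta=\delta_n\to0$.

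It remains to estimate the conditional smooth min-entropy. As $\RZ_V^n,\RZ_{\opw}^n$ are i.i.d., $H(\RZ_V^n|\RF^{(n)},\RZ_{\opw}^n)=nH(\RZ_V|\RZ_{\opw})-I(\RF^{(n)}\wedge\RZ_V^n|\RZ_{\opw}^n)$, but since $\RF^{(n)}$ is an arbitrary, possibly interactive, function, the min-entropy of a single length-$n$ block need not track this. The fix is the usual one: run the whole length-$n$ construction on $N$ independent super-blocks so that the triple $(\RZ_V^{n},\RZ_{\opw}^{n},\RF^{(n)})$ becomes i.i.d.\ across the $N$ blocks, and invoke the conditional AEP to get $\frac1N H_{\min}^{\delta}\big((\RZ_V^{n})^{N}\,\big|\,(\RF^{(n)})^{N},(\RZ_{\opw}^{n})^{N}\big)\to H(\RZ_V^n|\RF^{(n)},\RZ_{\opw}^n)$ as $N\to\infty$. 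Dividing by the total length $Nn$, this is a genuine SKA scheme of key rate $H(\RZ_V|\RZ_{\opw})-\frac1nI(\RF^{(n)}\wedge\RZ_V^n|\RZ_{\opw}^n)-o(1)$, once $N=N(n)\to\infty$ is chosen slowly enough that the accumulated omniscience-error bound $Np_n\to0$ and the AEP and smoothing gaps vanish together (a routine diagonalization). Hence $\wskc\ge H(\RZ_V|\RZ_{\opw})-(\rl+\epsilon)$ for all $\epsilon>0$, i.e.\ $\wskc\ge H(\RZ_V|\RZ_{\opw})-\rl$, which is \eqref{eq:RL:lb}.

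The main obstacle is exactly this privacy amplification step: extracting uniform bits from $\RZ_V^n$ that are secure against the \emph{joint} view $(\RF^{(n)},\RZ_{\opw}^n)$, where $\RZ_{\opw}^n$ is correlated with $\RZ_V^n$ and $\RF^{(n)}$ is arbitrary interactive communication, requires the conditional (smooth) min-entropy calculus together with the super-block/AEP device, plus the bookkeeping that drives the smoothing parameter, the AEP gap, and the omniscience error to zero simultaneously; the hypothesis $|V|\ge2$ is the standing assumption of the key-agreement setting and enters only through the commonly recovered $\RZ_V^n$ serving as the pre-key.
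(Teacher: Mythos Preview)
Your proposal is correct and follows exactly the approach the paper indicates: the proposition is quoted from \cite{chan20secure} and the only hint given here is that ``the lower bound on $\rl$ in terms of wiretap secret key capacity is obtained by using the idea of privacy amplification on the recovered source,'' which is precisely your omniscience-then-hash construction. Your super-block/AEP device to pass from Shannon to smooth min-entropy and the subsequent diagonalization are the standard way to make that one-line sketch rigorous, so there is nothing to add.
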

\begin{proposition}[\cite{chan20secure}, Theorem 2]
  \label{thm:RL:ub}
    For the secure omniscience scenario,
    \begin{align}
        \rl &\leq \frac{1}{n} [\rco(\RZ_V^n|\RF^{(n)}) + I(\RZ_V^n\wedge \RF^{(n)} | \RZ_{\opw}^n)] \leq \rco \label{eq:RL:ub},
    \end{align}
    where the  inequality holds for any integer $n$ and valid public discussion $\RF^{(n)}$ for block length $n$.
\end{proposition}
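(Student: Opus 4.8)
The plan is to convert a fixed ``one-shot'' discussion $\RF^{(n)}$ into a genuine omniscience scheme by \emph{block repetition followed by conditional omniscience}, and then to account for the leakage it incurs.

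First, fix $n$ and a valid discussion $\RF^{(n)}$, and group a long run of i.i.d.\ source realizations into $N$ super-blocks, so that super-block $j$ carries $(\RZ_{V,j}^n,\RZ_{\opw,j}^n)$ for $j=1,\dots,N$, for a total block length of $nN$. The composite scheme has two phases. In Phase~1 the terminals run, independently across the super-blocks and with fresh private randomness in each, $N$ copies $\RF_1^{(n)},\dots,\RF_N^{(n)}$ of $\RF^{(n)}$; concatenated, this is a legitimate interactive discussion on $\RZ_V^{nN}$. At the end of Phase~1 every terminal holds $(\RZ_{V,j}^n,\RF_j^{(n)})$ for every $j$, so $\RF^{(n)}$ now plays the role of common side information per super-block; by the achievability part of the communication-for-omniscience result of \cite{csiszar04} in its conditional form, applied to the i.i.d.\ super-source $(\RZ_V^n,\RF^{(n)})$, for every $\delta>0$ and all large $N$ there is a Phase~2 discussion $\RG$ of at most $N(\rco(\RZ_V^n|\RF^{(n)})+\delta)$ bits after which every terminal recovers $\RZ_V^{nN}$ with probability tending to $1$. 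Hence the composite scheme satisfies the recoverability condition \eqref{eq:omn:recoverability}.

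Next I would bound the leakage of the composite discussion, writing $\RA:=(\RF_j^{(n)})_{j=1}^N$. By the chain rule,
\begin{align*}
 I(\RA,\RG\wedge \RZ_V^{nN}|\RZ_{\opw}^{nN})
 = I(\RA\wedge \RZ_V^{nN}|\RZ_{\opw}^{nN})
   + I(\RG\wedge \RZ_V^{nN}|\RZ_{\opw}^{nN},\RA).
\end{align*}
Since the triples $(\RF_j^{(n)},\RZ_{V,j}^n,\RZ_{\opw,j}^n)$ are mutually independent across $j$ (disjoint source symbols, independent private randomness), the first term equals $\sum_{j=1}^{N} I(\RF_j^{(n)}\wedge \RZ_{V,j}^n|\RZ_{\opw,j}^n)=N\,I(\RF^{(n)}\wedge \RZ_V^n|\RZ_{\opw}^n)$, while the second term is at most $H(\RG)$, hence at most the bit length of $\RG$, i.e.\ at most $N(\rco(\RZ_V^n|\RF^{(n)})+\delta)$. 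Dividing by $nN$, letting $N\to\infty$ (so the overall block length tends to infinity) and then $\delta\to0$, this composite scheme witnesses
\begin{align*}
 \rl \le \tfrac1n\bigl[\rco(\RZ_V^n|\RF^{(n)})+I(\RZ_V^n\wedge\RF^{(n)}|\RZ_{\opw}^n)\bigr],
\end{align*}
which is the left-hand inequality. The right-hand inequality $\rl\le\rco$ then follows by specializing to the empty discussion $\RF^{(n)}$, for which $I(\RZ_V^n\wedge\RF^{(n)}|\RZ_{\opw}^n)=0$ and $\rco(\RZ_V^n|\RF^{(n)})=\rco(\RZ_V^n)=n\,\rco$.

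The bookkeeping in the leakage split is routine, as is the minor adaptation turning a scheme defined only at block lengths that are multiples of $n$ into one defined at all block lengths (pad the residual fewer-than-$n$ symbols by revealing them outright, at vanishing rate cost). The one step needing care is the clean invocation of the \emph{conditional} omniscience scheme: one must verify that after Phase~1 the terminals are genuinely in the conditional-omniscience setting with $\RF^{(n)}$ as common side information for the i.i.d.\ super-source $\RZ_V^n$, so that the Phase~2 description length is governed by $\rco(\RZ_V^n|\RF^{(n)})$ and nothing larger, and that estimating the Phase~2 leakage by its raw description length still yields the claimed bound.
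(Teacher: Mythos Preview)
The paper does not actually prove this proposition; it is quoted verbatim from \cite{chan20secure} and used as a black box. So there is no in-paper argument to compare against.

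Your argument is the natural two-phase construction and is essentially correct. One slip: you write ``at the end of Phase~1 every terminal holds $(\RZ_{V,j}^n,\RF_j^{(n)})$,'' but of course terminal $i$ only holds $(\RZ_{i,j}^n,\RF_j^{(n)})$; the subsequent sentences make clear you know this, so it is a typo rather than a gap. With that corrected, the invocation of conditional omniscience is clean: across super-blocks the pairs $(\RZ_V^n,\RF^{(n)})$ are i.i.d.\ (fresh private randomness in each block), user $i$ sees $(\RZ_i^n,\RF^{(n)})$, and any extra private randomness user $i$ retains can only help, so the Phase~2 rate $\rco(\RZ_V^n|\RF^{(n)})+\delta$ is indeed achievable. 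The leakage bookkeeping via the chain rule and $I(\RG\wedge\cdot\mid\cdot)\le H(\RG)$ is routine, and the specialization to empty $\RF^{(n)}$ together with $\rco(\RZ_V^n)=n\,\rco$ (immediate from the single-letter characterization of $\rco$) gives the second inequality.
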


Before we present our result, we will discuss some notions related to G\'{a}cs-K\"{o}rner common information, which play an important role in proving the result. The G\'{a}cs-K\"{o}rner common information of  $\RX$ and $\RY$ with joint distribution $P_{\RX,\RY}$ is defined as 
\begin{align}\label{eq:gk}
 J_{\op{GK}}(\RX,\RY) := \max \left\lbrace H(\RG) : H(\RG|\RX)=H(\RG|\RY) =0 \right\rbrace
\end{align}
A $\RG$  that satisfies the constraint in \eqref{eq:gk} is called a common function (c.f.) of $\RX$ and $\RY$. An optimal $\RG$ in \eqref{eq:gk} is called a \emph{maximal common function} (m.c.f.) of $\RX$ and $\RY$, and is denoted by $\op{mcf}(\RX, \RY)$. Similarly, for $n$ random variables, $\RX_1, \RX_2, \ldots, \RX_n$,  we can extend these definitions by replacing the condition in \eqref{eq:gk} with $H(\RG|\RX_1)=H(\RG|\RX_2)=\ldots=H(\RG|\RX_n)=0$. For a finite linear source pair $(\RZ_1, \RZ_2)$, i.e., $\RZ_1 = \RX \MM_1$ and $\RZ_2=\RX \MM_2$ for some matrices $\MM_1$ and $\MM_2$ where $\RX$ is a $1 \times n$ row vector that is  uniformly distributed on $\Fq^n$, it was shown in \cite{chan18zero} that the $\op{mcf}(\RZ_1, \RZ_2)$ is a linear function of $\RX$ which means that there exists a matrix $\MM_g$ such that $\op{mcf}(\RZ_1, \RZ_2) = \RX \MM_g$. 

The main result of this paper is the following theorem.
\begin{theorem}\label{thm:cwsk:red} 
For a Tree-PIN source $\RZ_V$ with linear wiretapper observing $\RZ_{\opw}$,
\begin{align*}
\wskc &= \min_{e \in E} H(\RY_e|\op{mcf}(\RY_e,\RZ_{\opw})),  \\ 
\rl &=\left(\sum_{e \in E}n_e -n_w\right)\log_2q -\wskc \text{ bits}.
\end{align*}
In fact, a linear non-interactive scheme  is  sufficient to achieve both $\wskc$ and $\rl$ simultaneously.
\end{theorem}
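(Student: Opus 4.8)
The plan is to split the theorem into a converse for $\wskc$ and the construction of one good omniscience scheme, and to let Propositions~\ref{thm:RL:lb} and \ref{thm:RL:ub} glue the two bounds together. Note first that, since $\RZ_{\opw}=\RX\MW$ is a linear image of the base vector and $\RX$ is recoverable from $\RZ_V$, we have $H(\RZ_V\mid\RZ_{\opw})=H(\RZ_V)-H(\RZ_{\opw})=\bigl(\sum_{e\in E}n_e-n_w\bigr)\log_2 q$ bits, so the claimed value of $\rl$ is exactly $H(\RZ_V\mid\RZ_{\opw})-\wskc$ and Proposition~\ref{thm:RL:lb} already supplies the inequality ``$\ge$''. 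Hence it is enough to produce, for some block length $n$, a \emph{deterministic linear non-interactive} discussion $\RF^{(n)}$ (a linear function of the observations, hence of $\RX^n$) such that (i) $\RF^{(n)}$ by itself achieves omniscience, whence $\rco(\RZ_V^n\mid\RF^{(n)})=0$, and (ii) $\RF^{(n)}$ is \emph{perfectly aligned}: writing $\mc F$ for the $\Fq$-span of $\RF^{(n)}$ viewed as linear functionals of $\RX^n$, we have $\op{colspan}\MW^{(n)}\subseteq\mc F$ and $\dim\mc F=n\bigl(\sum_e n_e-\min_e n_e\bigr)$. For such an $\RF^{(n)}$, $I(\RZ_V^n\wedge\RF^{(n)}\mid\RZ_{\opw}^n)=H(\RF^{(n)}\mid\RZ_{\opw}^n)=(\dim\mc F-n\,n_w)\log_2 q$, so Proposition~\ref{thm:RL:ub} gives $\rl\le\bigl(\sum_e n_e-\min_e n_e-n_w\bigr)\log_2 q$, Proposition~\ref{thm:RL:lb} then gives $\wskc\ge\min_e n_e\log_2 q$, and the trivial converse $\wskc\le\skc(\RZ_V)=\min_e n_e\log_2 q$ (tree-PIN \cite{sirinpin}) pins down both $\wskc$ and $\rl$ --- \emph{provided the source is irreducible}, i.e.\ $\op{mcf}(\RY_e,\RZ_{\opw})$ is constant for every $e$, so that $\min_e n_e\log_2 q=\min_e H(\RY_e\mid\op{mcf}(\RY_e,\RZ_{\opw}))$.

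Second I would reduce an arbitrary source to an irreducible one. By the m.c.f.\ linearity result of \cite{chan18zero}, $\op{mcf}(\RY_e,\RZ_{\opw})=\RX\MG_e$ with $\op{colspan}\MG_e=\op{colspan}(\text{block }e)\cap\op{colspan}\MW$, of some dimension $h_e$. Changing coordinates inside each edge block so that $\RY_e$ splits as an independent pair $(\RG_e,\RY_e^\circ)$ with $\RG_e$ spanning $\op{mcf}(\RY_e,\RZ_{\opw})$ and $\RY_e^\circ$ of length $n_e-h_e$, the residual source $\RZ_i^\circ:=(\RY_e^\circ:i\in\xi(e))$ with wiretapper $\RZ_{\opw}^\circ$, the image of $\RZ_{\opw}$ modulo $\bigoplus_e\op{colspan}\MG_e$, is again tree-PIN with a linear wiretapper of dimension $n_w-\sum_e h_e$, and because the $\RG_e$'s lie in distinct coordinate blocks one checks it is irreducible. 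The bookkeeping identities $\min_e(n_e-h_e)=\min_e H(\RY_e\mid\op{mcf}(\RY_e,\RZ_{\opw}))/\log_2 q$ and $\sum_e(n_e-h_e)-(n_w-\sum_e h_e)=\sum_e n_e-n_w$ show that both sides of the theorem coincide for $(\RZ_V^\circ,\RZ_{\opw}^\circ)$ and for $(\RZ_V,\RZ_{\opw})$. Finally, since the $\RG_e$'s are mutually independent, uniform, independent of the whole residual source, and each a deterministic function of $\RZ_{\opw}$, they contribute nothing to key agreement and cost nothing to broadcast; this yields $\wskc(\RZ_V\|\RZ_{\opw})=\wskc(\RZ_V^\circ\|\RZ_{\opw}^\circ)$ and lifts an optimal scheme for $(\RZ_V^\circ,\RZ_{\opw}^\circ)$ to one for $(\RZ_V,\RZ_{\opw})$ with the same leakage rate (broadcast every $\RG_e$, then run the residual scheme). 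So the general case follows from the irreducible case.

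The construction of $\RF^{(n)}$ for an irreducible source is the crux, and its obstacle is the interaction of (i) and (ii). Requirement (i) alone is classical: orient $T$ from the leaves, fix a minimizing edge $e^\star$, reserve an $n\min_e n_e$-dimensional ``key'' block inside the coordinates of $\RY_{e^\star}^n$, and have each other edge broadcast its coordinates combined through an invertible matrix with a propagating block of the key --- this is exactly the linear non-interactive tree-PIN protocol of \cite{sirinpin,sirinperfect}, of rate $\bigl(\sum_e n_e-\min_e n_e\bigr)\log_2 q$ and attaining perfect one-shot omniscience. The hard part is to choose the per-edge combining matrices so that, in addition, $\mc F$ absorbs $\op{colspan}\MW^{(n)}$ without $\dim\mc F$ exceeding this minimum. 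Irreducibility is exactly what makes this possible: $\op{colspan}\MW$ meets every edge block trivially, so $n_w\le\sum_e n_e-\max_e n_e$ and there is room; and, mirroring the motivating example --- where over the binary field the choice $\MM,\MM+\MI$ made $\tRF_2^{(2)}+\tRF_3^{(2)}$ collapse onto $\RZ_{\opw}^2$ --- one wants combining matrices under which a prescribed linear combination of the transmissions reproduces $\RZ_{\opw}^n$. I expect the bulk of the effort to lie in solving these ``invertible-and-telescoping'' linear-algebra constraints simultaneously over all edges; the natural device is to pass to $n$ i.i.d.\ copies, enlarging the ambient space, and verify that the finitely many non-degeneracy conditions can be met by an appropriate --- generic if necessary --- choice of matrices. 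Once $\RF^{(n)}$ is in hand, everything closes as in the first paragraph; as a by-product (and matching the explicit protocol of Section~\ref{sec:prot}) the reserved key block, being a complement of $\mc F\supseteq\op{colspan}\MW^{(n)}$, is a uniform string independent of $(\RF^{(n)},\RZ_{\opw}^n)$ and common to all users, hence itself a secret key of rate $\wskc$ with no privacy amplification needed.
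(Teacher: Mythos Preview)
Your proposal is correct and mirrors the paper's approach: reduce to an irreducible source (the paper does this one edge at a time in Lemma~\ref{lem:irred}, invoking Lemma~\ref{lem:indgk} to track how the m.c.f.'s behave, rather than stripping all $\RG_e$'s simultaneously as you do), and for the irreducible case construct a linear non-interactive discussion of rate $(\sum_e n_e-\min_e n_e)\log_2 q$ achieving both perfect omniscience and perfect alignment, with the existence of the required invertible combining matrices established over $\bb F_{q^n}$ via the Schwartz--Zippel lemma (Lemmas~\ref{lem:sz}--\ref{lem:nonzeropoly}). Your closing observation that the reserved key block is itself a perfect secret key, so that privacy amplification is unnecessary, is a small sharpening the paper does not make explicit.
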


The above theorem shows that  the intrinsic upper bound on $\wskc$ holds with equality. In the multiterminal setting, the intrinsic bound that follows from \cite[Theorem 4]{csiszar04} is given by 
\begin{align*}
 \wskc(\RZ_V ||\RZ_{\opw}) \leq \min_{\RJ-\RZ_{\opw}-\RZ_V}\pkc(\RZ_V|\RJ).
\end{align*}
 This is analogous to the intrinsic bound for the two terminal case \cite{maurer99intrinsic}. 
For the class of tree-PIN sources with linear wiretapper,  when $\RJ^*= \left( \op{mcf}(\RY_e,\RZ_{\opw}) \right)_{e \in E}$, it can be shown that  $\pkc(\RZ_V|\RJ^*)= \min_{e \in E} H(\RY_e|\op{mcf}(\RY_e,\RZ_{\opw})) $. This can be derived using the characterization  in \cite{csiszar04} of the conditional minimum rate of communication for omniscience, $\rco(\RZ_V|\RJ^*)$. In fact, the same derivation can also be found in \cite{alireza19} for a  $\RJ$ that is obtained by passing edge random variables through independent channels. In particular, $\RJ^{*}$ is a function of edge random variables $(\RY_e)_{e \in E}$ because $\op{mcf}(\RY_e, \RZ_{\opw})$ is a function of $\RY_e$. Therefore, we can see that $\pkc(\RZ_V|\RJ^*)$, which is an upper bound on $ \min_{\RJ-\RZ_{\opw}-\RZ_V}\pkc(\RZ_V|\RJ)$, matches with the $\wskc$ obtained from Theorem~\ref{thm:cwsk:red}.

 Furthermore, the theorem guarantees that in the tree-PIN case with linear wiretapper, we can achieve the wiretap secret key capacity through a linear secure omniscience scheme. This shows that omniscience can be useful even beyond the case when there is no wiretapper side information, where  \cite{csiszar04} showed that achieving omniscience is enough for the terminals to achieve the secret key capacity.
 
 Our proof of Theorem~\ref{thm:cwsk:red} is through a reduction to the particular subclass of \emph{irreducible} sources, which we defined next.
 
\begin{definition}
 A Tree-PIN source with linear wiretapper is said to be \emph{irreducible} iff $\op{mcf}(\RY_e, \RZ_{\opw}) $ is a constant function for every edge $e \in E$ .
\end{definition}

Whenever there is an edge $e$ such that $\RG_e:=\op{mcf}(\RY_e, \RZ_{\opw}) $ is a non-constant function, the user corresponding to a vertex incident on $e$ can reveal $\RG_e$ to the other users. This communication does not leak any additional information to the wiretapper, because $\RG_e$ is a function of $\RZ_{\opw}$. Intuitively, for the further communication, $\RG_e$ is not useful and hence can be removed from the source. After the reduction the m.c.f. corresponding to $e$ becomes a constant function. In fact, we can carry out the reduction until the source becomes irreducible. This idea of reduction is illustrated through the following example.

\begin{Example}
Let us consider a source $\RZ_V$ defined on a path of length 3, which is shown in Fig.~\ref{fig:exampletree}. Let $\RY_a = (\RX_{a1}, \RX_{a2})$, $\RY_b = \RX_{b1}$ and $\RY_c =\RX_{c1}$, where $\RX_{a1}$, $\RX_{a2}$, $\RX_{b1}$ and $\RX_{c1}$ are uniformly random and independent bits. 
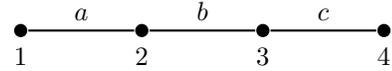
\begin{figure}[h]
\centering
\begin{tikzpicture}[-,>=stealth,thick, auto]
\tikzstyle{vertex}=[circle,fill=black,inner sep=0pt,minimum size=5pt];

\node[vertex]      (1)        [label= below:{$1$}] {};
\node[vertex]      (2)  [right  = 4 em of 1,label= below:{$2$}] {};
\node[vertex]      (3)  [right  = 4 em of 2,label= below:{$3$}] {};
\node[vertex]      (4)  [right  = 4 em of 3,label= below:{$4$}] {};

\draw (1) -- (2) node [midway, above] {$a$};
\draw (2) -- (3) node [midway, above] {$b$};
\draw (3) -- (4) node [midway, above] {$c$};
\end{tikzpicture}
\caption{A path of length 3}
\label{fig:exampletree}
 \end{figure}
 If $\RZ_{\opw}=\RX_{b1}+\RX_{c1}$, then the source is irreducible because $\op{mcf}(\RY_e, \RZ_{\opw})$ is a constant function for all $e \in \{a,b,c\}$. 
 
 However if  $\RZ_{\opw}=(\RX_{a1}+\RX_{a2},  \RX_{b1}+\RX_{c1})$, then the source is not irreducible, as $\op{mcf}(\RY_a, \RZ_{\opw}) =\RX_{a1}+\RX_{a2}$, which is a non-constant function. An equivalent representation of the source is
 $\RY_a = (\RX_{a1}, \RG_{a})$, $\RY_b = \RX_{b1}$, $\RY_c =\RX_{c1}$ and $\RZ_{\opw}=(\RG_{a}, \RX_{b1}+\RX_{c1})$, where $\RG_{a}=\RX_{a1}+\RX_{a2}$, which is also a uniform bit independent of $(\RX_{a1}, \RX_{b1}, \RX_{c1})$. So, for omniscience, user 2 initially can reveal $\RG_{a}$ without affecting the information leakage as it is completely aligned to $\RZ_{\opw}$. Since everyone has $\RG_a$, users can just communicate according to the omniscience scheme corresponding to the source without $\RG_a$. Note that this new source is irreducible.
\end{Example}

The next lemma shows that the kind of reduction to an irreducible source used in the above example is indeed optimal in terms of $R_L$ and $\wskc$ for all tree-PIN sources with linear wiretapper.

\begin{lemma}\label{lem:irred} 
 If the Tree-PIN source with linear wiretapper $(\RZ_V,\RZ_{\opw})$ is not irreducible then there exists an irreducible source $(\tRZ_V, \tRZ_{\opw})$ such that 
 \begin{align*}
\wskc(\RZ_V|| \RZ_{\opw}) = \wskc(\tRZ_V||\tRZ_{\opw}),\\ \rl(\RZ_V||\RZ_{\opw}) = \rl(\tRZ_V||\tRZ_{\opw}),\\
H(\RY_e|\op{mcf}(\RY_e,\RZ_{\opw})) = H(\tRY_e),
\end{align*}
for all $e \in E$.
\end{lemma}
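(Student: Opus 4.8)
The plan is to prove the lemma by a single reduction step together with an induction on the number of non-constant maximal common functions, so it suffices to treat the case where there is exactly one edge $e_0\in E$ with $\RG_{e_0}:=\op{mcf}(\RY_{e_0},\RZ_{\opw})$ non-constant. Since the source is a finite linear source, by the result of \cite{chan18zero} quoted before Theorem~\ref{thm:cwsk:red}, $\RG_{e_0}$ is a linear function of the base vector $\RX$, and being a common function of $\RY_{e_0}$ and $\RZ_{\opw}$ it is both a linear function of $\RY_{e_0}$ and a linear function of $\RZ_{\opw}$. First I would fix a linear change of coordinates on the coordinates of $\RY_{e_0}$ so that $\RY_{e_0}=(\tRY_{e_0},\RG_{e_0})$ with $\tRY_{e_0}$ a shorter uniform i.i.d.\ block over $\Fq$, independent of everything else except through $\RG_{e_0}$ appearing in $\RZ_{\opw}$; likewise choose coordinates on $\RZ_{\opw}$ so that $\RZ_{\opw}=(\RG_{e_0},\tRZ_{\opw})$. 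Here $\tRZ_{\opw}$ is still a linear function of the base vector, and by the maximality of the m.c.f.\ one checks $\op{mcf}(\tRY_{e_0},\tRZ_{\opw})$ is constant (otherwise one could enlarge $\RG_{e_0}$); the other m.c.f.'s are unaffected because dropping the coordinate $\RG_{e_0}$ from $\RY_{e_0}$ and from $\RZ_{\opw}$ simultaneously does not change $\op{mcf}(\RY_e,\RZ_{\opw})$ for $e\neq e_0$ (this needs a short argument using that $\RG_{e_0}$ is the \emph{only} shared randomness between the $e_0$-block and $\RZ_{\opw}$). Define $\tRY_e=\RY_e$ for $e\neq e_0$; the resulting $(\tRZ_V,\tRZ_{\opw})$ is a Tree-PIN source with linear wiretapper on the same tree, and it is irreducible by construction. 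The third displayed equality, $H(\RY_{e_0}\mid\op{mcf}(\RY_{e_0},\RZ_{\opw}))=H(\tRY_{e_0})$, is then immediate from $\RY_{e_0}=(\tRY_{e_0},\RG_{e_0})$ and independence, and holds trivially for $e\neq e_0$.

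Next I would prove the $\rl$ equality in two inequalities. For $\rl(\RZ_V\|\RZ_{\opw})\le\rl(\tRZ_V\|\tRZ_{\opw})$: given any omniscience scheme $\tRF^{(n)}$ for $(\tRZ_V,\tRZ_{\opw})$, prepend to it the one-shot message $\RG_{e_0}^n$ sent by an endpoint of $e_0$; since $\RG_{e_0}$ is a deterministic linear function of $\RZ_{\opw}$, we have $I(\RF^{(n)}\wedge \RZ_V^n\mid\RZ_{\opw}^n)=I(\tRF^{(n)}\wedge\RZ_V^n\mid\RZ_{\opw}^n)$, and because $\RZ_{\opw}^n=(\RG_{e_0}^n,\tRZ_{\opw}^n)$ and conditioned on $\RG_{e_0}^n$ the source $\RZ_V^n$ restricted to its non-$\RG_{e_0}$ part is exactly $\tRZ_V^n$ while $\RG_{e_0}^n$ is recoverable by everyone, this equals $I(\tRF^{(n)}\wedge\tRZ_V^n\mid\tRZ_{\opw}^n)$; moreover the combined scheme achieves omniscience of $\RZ_V^n$. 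Taking infima gives the bound. For the reverse inequality: given an omniscience scheme $\RF^{(n)}$ for $(\RZ_V,\RZ_{\opw})$, note every user already holds $\RG_{e_0}^n$ after recovering $\RZ_V^n$ — but more carefully, one shows $\RF^{(n)}$ can be taken to include a message revealing $\RG_{e_0}^n$ (this costs nothing in leakage since $\RG_{e_0}^n$ is a function of $\RZ_{\opw}^n$), and then $\RF^{(n)}$ conditioned on $\RG_{e_0}^n$ is an omniscience scheme for $\tRZ_V^n$ with the wiretapper holding $\tRZ_{\opw}^n$, with the same conditional mutual information. The cleanest route here is probably to invoke Proposition~\ref{thm:RL:ub} and Proposition~\ref{thm:RL:lb} once both $\wskc$-equalities are in hand, but I would prefer the direct scheme-transformation argument because it also yields the ``linear non-interactive'' transfer needed for Theorem~\ref{thm:cwsk:red}.

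For the $\wskc$ equality I would argue symmetrically at the level of SKA schemes: from an SKA scheme for $(\tRZ_V,\tRZ_{\opw})$ achieving rate $R$, prepend the public message $\RG_{e_0}^n$ to obtain an SKA scheme for $(\RZ_V,\RZ_{\opw})$ of the same rate, using that $\RG_{e_0}^n$ is a function of $\RZ_{\opw}^n$ so the secrecy condition \eqref{eq:sk:secrecy} is preserved — here $I(\RK^{(n)}\wedge\RF^{(n)},\RZ_{\opw}^n)=I(\RK^{(n)}\wedge\tRF^{(n)},\RZ_{\opw}^n)=I(\RK^{(n)}\wedge\tRF^{(n)},\tRZ_{\opw}^n)$ since $\RG_{e_0}^n$ is determined by $\tRZ_{\opw}^n$ together with the rest, wait — more precisely $\RZ_{\opw}^n=(\RG_{e_0}^n,\tRZ_{\opw}^n)$ so one just notes $\RF^{(n)}=(\RG_{e_0}^n,\tRF^{(n)})$ and $\RG_{e_0}^n$ is a function of $\RZ_{\opw}^n$. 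Conversely, from an SKA scheme for $(\RZ_V,\RZ_{\opw})$, after public discussion everyone has $\RG_{e_0}^n$ (it is a function of each $\RZ_i^n$ incident to $e_0$, hence of everyone's data once omniscience-like recovery is available — or simply append its transmission), and conditioning on $\RG_{e_0}^n$ turns it into an SKA scheme for $(\tRZ_V,\tRZ_{\opw})$ of the same key rate, again because conditioning on a function of $\RZ_{\opw}^n$ does not change the secrecy gap. I expect the main obstacle to be the bookkeeping in the converse directions, namely verifying rigorously that ``append the public transmission of $\RG_{e_0}^n$ and then condition'' legitimately converts a scheme for the original source into one for the reduced source without disturbing the recoverability and secrecy/leakage quantities — in particular handling the $\liminf/\limsup$'s uniformly and confirming that the independence structure ($\tRY_{e_0}$ independent of $\RG_{e_0}$ and of all other blocks, $\tRZ_{\opw}$ a function of the remaining base coordinates) really does make the conditional source on the $\RG_{e_0}^n$-event coincide in distribution with $(\tRZ_V,\tRZ_{\opw})$. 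Once that structural fact is nailed down, the four claimed equalities follow, and iterating the single-step reduction over all edges with non-constant m.c.f.\ (finitely many) completes the proof.
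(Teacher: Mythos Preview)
Your plan is essentially the paper's proof: the same single-edge reduction via a linear change of coordinates splitting off $\RG_{e_0}$ from both $\RY_{e_0}$ and $\RZ_{\opw}$, the same iteration to irreducibility, and the same scheme-transfer arguments for the four inequalities. The two m.c.f.\ preservation claims you flag as needing ``a short argument'' are handled in the paper by the auxiliary Lemma~\ref{lem:indgk}: if $(\RX,\RY)$ is independent of $\RZ$ then $\op{mcf}(\RX,(\RY,\RZ))=\op{mcf}(\RX,\RY)$ and $\op{mcf}((\RX,\RZ),(\RY,\RZ))=(\op{mcf}(\RX,\RY),\RZ)$. Once you verify that $\RG_{e_0}$ is independent of $(\tRZ_V,\tRZ_{\opw})$ (immediate from the i.i.d.\ structure after the coordinate change), both ``$\op{mcf}(\tRY_{e_0},\tRZ_{\opw})$ is constant'' and ``$\op{mcf}(\RY_b,\RZ_{\opw})$ is unchanged for $b\neq e_0$'' follow directly from this lemma.

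For the obstacle you correctly isolate in the converse directions, your ``append the transmission of $\RG_{e_0}^n$ and condition'' idea has a genuine snag: in the reduced model $(\tRZ_V,\tRZ_{\opw})$ no terminal observes $\RG_{e_0}^n$, so there is nothing to append, and ``conditioning on $\RG_{e_0}^n$'' is not a scheme-level operation available to the tilde users. The paper resolves this by \emph{simulation}: a terminal incident to $e_0$ uses its private randomness to generate a fresh i.i.d.\ uniform copy of $\RG_{e_0}^n$, independent of the tilde source, and broadcasts it as the first message. Because $\RG_{e_0}$ is independent of $(\tRZ_V,\tRZ_{\opw})$, the simulated pair (tilde source with the fake $\RG_{e_0}^n$ inserted at edge $e_0$, together with $(\RG_{e_0}^n,\tRZ_{\opw}^n)$) has exactly the law of $(\RZ_V^n,\RZ_{\opw}^n)$, so any scheme for the original source can be run verbatim on the simulated data, and the leakage and secrecy identities reduce to the original ones by independence. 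This simulation device is the one missing ingredient in your plan; with it, the bookkeeping you were worried about becomes a one-line mutual-information calculation.
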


As a consequence of Lemma 1, to prove Theorem 1, it suffices to consider only irreducible sources. For ease of reference, we re-state the theorem for irreducible sources below.

\begin{theorem}\label{thm:cwsk:irred} 
If Tree-PIN source with linear wiretapper is irreducible then 
\begin{align*}
\wskc &= \min_{e \in E} H(\RY_e)=\skc, \\
\rl &=\left(\sum_{e \in E}n_e -n_w\right)\log_2q- \skc\text{ bits},
\end{align*}
where $\skc$ is the secret key capacity of Tree-PIN source without the wiretapper side information~\cite{csiszar04}.
\end{theorem}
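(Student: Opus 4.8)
The plan is to prove the whole theorem at once by exhibiting a single linear, non-interactive scheme run on a suitably chosen finite number $n$ of i.i.d.\ copies of the source, and then reading off $\rl$ and $\wskc$ from it together with Propositions~\ref{thm:RL:lb} and~\ref{thm:RL:ub}. I will use freely that $\RZ_V$ and the base vector $\RX=(\RY_e)_{e\in E}$ determine each other, so $H(\RZ_V)=H(\RX)=\big(\sum_{e\in E}n_e\big)\log_2 q$; that $H(\RZ_{\opw})=n_w\log_2 q$ since $\MW$ has full column rank, whence $H(\RZ_V|\RZ_{\opw})=\big(\sum_{e\in E}n_e-n_w\big)\log_2 q$; and the known facts, for a tree-PIN source, that $\skc(\RZ_V)=\min_{e\in E}H(\RY_e)=\lambda\log_2 q$ with $\lambda:=\min_e n_e$, and hence $\rco(\RZ_V)=H(\RZ_V)-\skc(\RZ_V)=\big(\sum_e n_e-\lambda\big)\log_2 q$, by Csisz\'ar--Narayan duality~\cite{csiszar04,sirinpin}. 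Note that irreducibility applied to a minimum edge $e_0$ says that no nonzero wiretapper observation is supported on $e_0$, i.e.\ $\op{colspace}(\MW)$ meets the $\lambda$-dimensional coordinate subspace of the symbols on $e_0$ trivially; this forces $n_w\le\sum_e n_e-\lambda$, which is exactly what makes the claimed value of $\rl$ nonnegative.

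\textbf{What has to be shown.} Replacing $\RZ_{\opw}$ by a constant cannot decrease the key capacity, so $\wskc\le\skc(\RZ_V)=\min_e H(\RY_e)$; and Proposition~\ref{thm:RL:lb} gives $\rl\ge H(\RZ_V|\RZ_{\opw})-\wskc=\big(\sum_e n_e-n_w\big)\log_2 q-\wskc$. It therefore suffices to produce a scheme that (a)~has leakage rate at most $\big(\sum_e n_e-n_w\big)\log_2 q-\skc$ and (b)~generates a secret key of rate at least $\skc$: part (b) together with the trivial converse gives $\wskc=\skc$, after which the bound of Proposition~\ref{thm:RL:lb} meets the leakage rate in (a) and pins $\rl$ down.

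\textbf{The aligned omniscience scheme --- the crux.} I would start from the standard linear, non-interactive, rate-optimal omniscience protocol for a tree-PIN source~\cite{sirinpin,chan19}, in which user $i$ transmits $\sum_{e:\,i\in\xi(e)}\RY_e\MA_{i,e}$ and the mixing matrices $\MA_{i,e}$ range over a family cut out only by invertibility-type conditions that make every user reconstruct all of $\RX$; this already achieves perfect omniscience at communication rate $\rco(\RZ_V)$, even for $n=1$. The extra requirement I impose is \emph{perfect alignment}: that the overall communication matrix $\MA$ satisfy $\op{colspace}(\MW)\subseteq\op{colspace}(\MA)$. Since each edge has exactly two endpoints, this unwinds to a linear system $\MA_{u,e}\Mc_{u,k}+\MA_{v,e}\Mc_{v,k}=\Mw_{k,e}$ in auxiliary unknowns $\Mc_{\cdot,\cdot}$, where $e=\{u,v\}$ ranges over the edges and $\Mw_{k,e}$ is the edge-$e$ block of the $k$-th column of $\MW$. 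Its solvability is a polynomial condition on the $\MA_{i,e}$, and the crucial point is that irreducibility --- $\op{mcf}(\RY_e,\RZ_{\opw})$ constant for every edge, i.e.\ no nonzero wiretapper observation local to a single edge --- is precisely what prevents this condition from forcing any $\MA_{i,e}$ into its forbidden, non-invertible locus. A Schwartz--Zippel / counting argument then shows that, working over a large enough extension field $\mathbb{F}_{q^n}$ of $\mathbb{F}_q$ (equivalently, over $n$ i.i.d.\ copies of the source, with mixing coefficients taken to be suitable powers of a primitive element), the $\MA_{i,e}$ can be chosen simultaneously admissible for omniscience and compatible with alignment. For the length-three path with $\RZ_{\opw}=\RX_a+\RX_b+\RX_c$, this is exactly why $n=1$ fails but $n=2$ (i.e.\ $\mathbb{F}_4$) works, as in the motivating example. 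Carrying this out --- writing down the base family explicitly, describing its forbidden loci, and checking via irreducibility that the alignment system avoids them over $\mathbb{F}_{q^n}$ --- is where essentially all the work lies, and is the main obstacle.

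\textbf{Reading off $\rl$ and $\wskc$.} Let $\RF^{(n)}$ be the resulting communication: linear, non-interactive, attaining perfect omniscience (so $\rco(\RZ_V^n|\RF^{(n)})=0$), with $H(\RF^{(n)})=n\,\rco(\RZ_V)$ and with $\RZ_{\opw}^n$ a deterministic function of $\RF^{(n)}$. Then Proposition~\ref{thm:RL:ub} yields
\[
\rl\;\le\;\tfrac1n\big[\rco(\RZ_V^n|\RF^{(n)})+I(\RZ_V^n\wedge\RF^{(n)}|\RZ_{\opw}^n)\big]\;=\;\tfrac1n H(\RF^{(n)}|\RZ_{\opw}^n)\;=\;\rco(\RZ_V)-n_w\log_2 q,
\]
which equals $\big(\sum_e n_e-n_w\big)\log_2 q-\skc$; here I used that $\RF^{(n)}$ is a function of $\RZ_V^n$ and $\RZ_{\opw}^n$ a function of $\RF^{(n)}$. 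For the key, every user recovers $\RX^n$, and $H(\RX^n|\RF^{(n)},\RZ_{\opw}^n)=H(\RX^n|\RF^{(n)})=H(\RX^n)-H(\RF^{(n)})=n\lambda\log_2 q$, so by linear privacy amplification on the recovered $\RX^n$ one extracts a key $\RK^{(n)}$ of rate $\lambda\log_2 q=\skc$ that is statistically independent of $(\RF^{(n)},\RZ_{\opw}^n)$; hence $\wskc\ge\skc$. Combined with the easy directions this gives $\wskc=\skc=\min_e H(\RY_e)$, and then Proposition~\ref{thm:RL:lb} gives $\rl\ge\big(\sum_e n_e-n_w\big)\log_2 q-\skc$, matching the upper bound. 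Since the single scheme $\RF^{(n)}$ realizes both quantities, the last sentence of the theorem follows as well.
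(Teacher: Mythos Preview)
Your overall plan --- construct a linear, non-interactive omniscience scheme over an extension field $\mathbb{F}_{q^n}$ that additionally satisfies \emph{perfect alignment} $H(\RZ_{\opw}^n\mid\RF^{(n)})=0$, then read off both $\rl$ and $\wskc$ via Propositions~\ref{thm:RL:lb} and~\ref{thm:RL:ub} together with privacy amplification --- is exactly the paper's strategy. Your converse, the dimension count $n_w\le\sum_e n_e-\lambda$, and the final ``reading off'' paragraph are all correct and match the paper.

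Where your sketch diverges from, and is weaker than, the paper's execution:

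\textbf{The communication template.} Writing each user's message as a single sum $\sum_{e:i\in\xi(e)}\RY_e\MA_{i,e}$ is not the rate-optimal tree-PIN omniscience scheme: an internal node of degree $d_i$ must emit $d_i-1$ separate $s$-blocks (one per non-parent incident edge), not one. For a path this happens to coincide with your formula, but not for a general tree; and for arbitrary $n_e$ you also need a second ``reveal the surplus symbols'' layer (the $\MB$-matrices in the paper) to get $H(\RF^{(n)})$ equal to exactly $n\,\rco$.

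\textbf{The parameterization of alignment.} You treat alignment as a \emph{primal} linear system $\MA_{u,e}\Mc_{u,k}+\MA_{v,e}\Mc_{v,k}=\Mw_{k,e}$ in auxiliary unknowns $\Mc$, and propose to show its solvability locus (a polynomial condition in the $\MA$'s) is not contained in the non-invertibility locus. The paper's \emph{dual} parameterization is considerably cleaner: with the correct template the left nullspace of $\MF^{(n)}$ is exactly $s$-dimensional, spanned by the rows of a block matrix $\MS=(\MS_e)_{e\in E}$ with each $\MS_e$ an $s\times s$ block, and alignment is the single linear constraint $\MS\MW^{(n)}=\M0$. One takes the $s\big(\sum_e n_e-n_w\big)$ free coordinates of $\MS$ as the actual variables, shows $\prod_e\det(\MS_e)$ is a nonzero polynomial of degree at most $s|E|$ in them, applies Schwartz--Zippel, and only afterwards recovers the communication coefficients $\MA_{i,e}=-\MS_{e'}^{-1}\MS_e$. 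The ``crucial point'' you flag --- that irreducibility is exactly what makes the bad event avoidable --- becomes a short linear-algebra lemma in this dual picture: if $\det(\MS_e)$ vanished identically on the left nullspace of $\MW^{(n)}$, one would produce a nonzero vector supported on the edge-$e$ block in the column span of $\MW$, contradicting irreducibility. Your primal formulation obscures this structure and makes the obstacle you correctly identify harder than it needs to be.
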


\section{Proofs}\label{sec:proof}
\ifPAGELIMIT
In this section we provide the essential proof ideas while the full proofs are available in the longer version~\cite{treepin21}. \subsection{Proof sketch of Lemma~\ref{lem:irred}}
First we identify an edge $e$ such that $\RG_e:=\op{mcf}(\RY_e, \RZ_{\opw})$ is a non-constant function. Then, by appropriately transforming the random vector $\RY_e$, we can separate out $\RG_e$ from the random variables corresponding to the edge and the wiretapper. Later we argue that the source $(\RZ_V,\RZ_{\opw})$ can be reduced into $(\tRZ_V,\tRZ_{\opw})$ by removing $\RG_e$ entirely without affecting  $\wskc$ and $\rl$. And we repeat this process until the source becomes irreducible. At each stage, to show that  $\op{mcf}(\tRY_b, \tRZ_{\opw})=\op{mcf}(\RY_b, \RZ_{\opw})$, for $b \neq e$, and  $\op{mcf}(\tRY_e, \tRZ_{\opw})$ is a constant function, we  use  the following lemma which is proved in \cite{treepin21}.
\begin{lemma}\label{lem:indgk}
 If $(\RX,\RY)$ is independent of $\RZ$, then  $\op{mcf}(\RX, (\RY,\RZ)) = \op{mcf}(\RX,\RY)$ and $\op{mcf}((\RX,\RZ), (\RY,\RZ)) = (\op{mcf}(\RX,\RY) ,\RZ)$.
\end{lemma}

\subsection{Proof sketch of Theorem~\ref{thm:cwsk:irred}}
\emph{Converse part.} An  upper bound on  $\wskc$  is $\skc$, because no wiretapper side information can only increase the key generation ability of users. It was shown in \cite[Example 5]{csiszar04} that if the random variables of a source form a Markov chain on a tree, then $\skc = \min_{(i,j) : \{i,j\} = \xi(e) } I(\RZ_i ; \RZ_j)$. In the tree-PIN case, which satisfies the Markov property, this turns out to be $\skc=\min_{e \in E} H(\RY_e)$. As consequence, we have $\wskc \leq \min_{e \in E} H(\RY_e)$ and 
\begin{align}\label{eq:rl:conv}
\begin{split}
 \rl&\utag{a}\geq H(\RZ_V|\RZ_{\opw}) -\wskc \\ 
 &\utag{b}= \left(\sum_{e \in E}n_e -n_w\right)\log_2q -\wskc \\
 &\geq \left(\sum_{e \in E}n_e -n_w\right)\log_2q  - \min_{e \in E} H(\RY_e)
 \end{split}
\end{align}\\
where \uref{a} follows from Proposition~\ref{thm:RL:lb} and \uref{b} is due to the full column-rank assumption on $\MW$.

\emph{Achievability part.}  In this section, we will show the existence of an omniscience scheme with leakage rate $\left(\sum_{e \in E}n_e -n_w\right)\log_2q  - \min_{e \in E} H(\RY_e)$. Hence $\rl \leq \left(\sum_{e \in E}n_e -n_w\right)\log_2q  - \min_{e \in E} H(\RY_e)$,  which together with the chain of inequalities~\eqref{eq:rl:conv} imply that $\wskc = \min_{e \in E} H(\RY_e)=\skc$  and $\rl =\left(\sum_{e \in E}n_e -n_w\right)\log_2q- \skc$. In particular, for achieving a secret key of rate $\wskc = \min_{e \in E} H(\RY_e)$, the terminals use privacy amplification on the recovered source.

In fact, the existence of an omniscience scheme is shown by first constructing a template for the communication with desired properties and then showing the existence of an instance of it by random coding. The  following are the key components involved in this construction.
\begin{enumerate}
\item \emph{Deterministic scheme:} A scheme is said to be deterministic if  terminals are  not allowed to use any locally generated private randomness. 
 \item \emph{Perfect omniscience~\cite{sirinperfect}:} For a fixed $n \in \bb{N}$, $\RF^{(n)}$ is said to achieve perfect omniscience if  terminals can recover the source $\RZ_V^n$ perfectly, i.e., $H(\RZ_V^n|\RF^{(n)}, \RZ_i^n) =0$ for all $i \in V$. If we do not allow any private randomness, then $H( \RF^{(n)} |  \RZ_V^n) = 0$, which implies
  \begin{align*}\label{eq:perfectomni}
  \begin{split}
   \frac{1}{n}  I(\RZ_V^n\wedge \RF^{(n)} | \RZ_{\opw}^n) &= \frac{1}{n}\left [H( \RF^{(n)} | \RZ_{\opw}^n) - H( \RF^{(n)} | \RZ_{\opw}^n, \RZ_V^n) \right] \\&= \frac{1}{n}H( \RF^{(n)} | \RZ_{\opw}^n).
  \end{split}
\end{align*} 
\item \emph{Perfect alignment:} For an $n \in \bb{N}$, we say that $\RF^{(n)}$ perfectly aligns with $\RZ_{\opw}^n$ if $H( \RZ_{\opw}^n|\RF^{(n)} ) = 0$. Note that $\RZ_{\opw}^n$ is only recoverable from $\RF^{(n)}$ but not the other way around. In this case, $H( \RF^{(n)} | \RZ_{\opw}^n) =H( \RF^{(n)}) - H( \RZ_{\opw}^n)$. In an FLS, the wiretapper side information is $\RZ_{\opw}^n = \RX^n \MW^{(n)}$ where $\RX$ is the base vector. Suppose the communication is of the form $\RF^{(n)} = \RX^n \MF^{(n)}$, for some matrix $\MF^{(n)}$, then the condition of  perfect alignment is equivalent to the condition that the column space of $\MF^{(n)}$ contains the column space of $\MW^{(n)}$. This is in turn equivalent to the condition that the left nullspace of $\MW^{(n)}$ contains the left nullspace of $\MF^{(n)}$, i.e., if $\Ry \MF^{(n)}=\R0$ for some vector $\Ry$ then $\Ry \MW^{(n)}=\R0$.
\end{enumerate}
So we will construct a linear communication scheme (deterministic), for some fixed $n$, achieving both perfect omniscience and perfect alignment. As a consequence,  the leakage rate for omniscience is equal to $\frac{1}{n}  I(\RZ_V^n\wedge \RF^{(n)} | \RZ_{\opw}^n) = \frac{1}{n}H( \RF^{(n)} | \RZ_{\opw}^n) = \frac{1}{n}[H( \RF^{(n)}) - H( \RZ_{\opw}^n)] = \frac{1}{n}H( \RF^{(n)}) - n_w\log_2q$. To show the desired rate, it is enough to have $\frac{1}{n}H( \RF^{(n)}) = \left(\sum_{e \in E}n_e\right) \log_2q  - \min_{e \in E} H(\RY_e) $.

This construction is given separately for multiple cases for the ease of understanding.  We start with the special case $n_e=s$ for all $e \in E$. First we consider a PIN model defined on a path graph. Then we extend it to the tree-PIN case by using the fact that there exists a unique path from any vertex to the root of the tree. Later we move to the case of arbitrary $n_e$.

\subsubsection{Path with length $L$ and $n_e=s$ for all $e \in E$}
 Let $V= \{0,1,\ldots,L\}$ be the set of vertices and $E=\{1,\ldots,L\}$ be the edge set such that edge $i$ is incident on  vertices $i-1$ and $i$. Since $n_e =s$, $\min_{e \in E} H(\RY_e)=s \log_2q$. Fix a positive integer $n$,  such that $n > \log_q(sL)$. With $n$ i.i.d. realizations of the source, the vector corresponding to edge $i$ can be expressed as $\RY_i^{n} =[ \RX^n_{i,1} \ldots \RX^n_{i,s}]$ where $\RX^n_{i,j}$'s  can be viewed as element in $\bb{F}_{q^n}$. Hence $\RY_i^{n} \in (\bb{F}_{q^n})^s$.  The goal is to construct a linear communication scheme $\RF^{(n)}$ that achieves both perfect omniscience and perfect alignment simultaneously such that $H( \RF^{(n)}) =n \left[ \left(\sum_{e \in E}n_e\right) \log_2q  - \min_{e \in E} H(\RY_e)\right] = n  \left(sL - s\right) \log_2q$.  
 
 Now we will construct the  communication as follows. Leaf nodes $0$ and $L$ do not communicate. The internal node $i$ communicates $\tRF_i^{(n)} = \RY^n_{i} + \RY^n_{i+1}\MA_{i}$, where $\MA_{i}$ is an $s \times s$ matrix with elements from $\bb{F}_{q^n}$. This  communication is of the form
\begin{align*}
\RF^{(n)} & = \begin{bmatrix}
\tRF_1^{(n)} \cdots \tRF_{L-1}^{(n)}
\end{bmatrix} \\ &= \begin{bmatrix}
\RY_1^{n}\cdots \RY_L^{n}
\end{bmatrix} \underbrace{\begin{bmatrix}
\MI & \M0& \cdots  &\M0&\M0\\
\MA_1&\MI &  \cdots &\M0&\M0\\
\M0&{\MA_2} & \cdots &\M0&\M0\\
\vdots&\vdots&\ddots&\vdots&\vdots\\
 \M0 &\M0&\cdots&\MA_{L-2}&\MI  \\
\M0 &\M0&\cdots& \M0&\MA_{L-1} \\
\end{bmatrix}}_{:=\MF^{(n)}}
\end{align*}
Here $\MF^{(n)}$ is an $sL \times s(L-1) $ matrix over $\bb{F}_{q^n}$. Observe that $\rank_{\bb{F}_{q^n}}(\MF^{(n)})= s(L-1)$, which implies that $H( \RF^{(n)}) =\left(sL - s\right) \log_2q^n$ and  the dimension of the left nullspace of $\MF^{(n)}$ is $s$. Now the communication coefficients, $(\MA_i : 1\leq i\leq L-1)$, have to be chosen such that $\RF^{(n)}$ achieves both perfect omniscience and perfect alignment. Let us derive some conditions on these matrices.

Perfect omniscience  is equivalent to the condition that the $\MA_i$'s  are invertible. The Necessity of the invertibility condition is immediate since if $\MA_{L-1}$ were not invertible, then vector $\RY_L^n$ is not completely recoverable from the communication by some users, for instance, user $0$. Sufficiency follows by observing that for any $i \in V$, $[\MF^{(n)} \mid \MH_i]$  is  full rank,  where $\MH_i$ is a block-column vector with an identity matrix at location $i$ and zero matrix in the rest of the locations. In other words, $(\RY_1^{n}\cdots \RY_L^{n})$ is recoverable from $(\RF^{(n)},  \RY_i^n)$ for any $i \in E$, hence achieving omniscience. So we assume that the $\MA_i$'s are invertible.

 For perfect alignment, we require that the left nullspace of $\MF^{(n)}$ is contained in  the left nullspace of $\MW^{(n)}$, which is the wiretapper matrix corresponding to $n$ i.i.d. realizations. Note that $\MW^{(n)}$ is a $\left(\sum_{e \in E} n_e\right) \times n_w$ matrix over $\bb{F}_{q^n}$ with entries $\MW^{(n)}(k,l) = \MW(k,l) \in  \bb{F}_{q}$; since $\bb{F}_{q} \subseteq \bb{F}_{q^n}$, $\MW^{(n)}(k,l) \in \bb{F}_{q^n}$. As pointed out before, the dimension of the  left nullspace of $\MF^{(n)}$ is $s$ whereas the dimension of the left nullspace of $\MW^{(n)}$ is $sL-n_w$. Since the source is irreducible, it follows from Lemma~\ref{lem:upbdirred} in Appendix \ref{app:nonzerodet} that $s \leq sL-n_w$. Since the dimensions are appropriate, the left nullspace inclusion condition is not impossible. Observe that 
\begin{align*}
\underbrace{\begin{bmatrix}
\MS_1 & -\MS_1\MA_1^{-1} &
\cdots&
(-1)^{L-1}\MS_1\MA_{1}^{-1}\ldots \MA_{L-1}^{-1}
\end{bmatrix}}_{:=\MS} \MF^{(n)}=\M0.
\end{align*}
where  $\MS_1$  is some invertible matrix. We write $\MS = [\MS_1 \ldots \MS_L]$ with $\MS_{i+1} :=(-1)^{i}\MS_1\MA_{1}^{-1}\ldots \MA_{i}^{-1}$ for $1 \leq i \leq L-1$. Notice that the $\MS_i$'s are invertible. We can also express the $\MA_i$'s in terms of the $\MS_i$'s as $\MA_i= -\MS_{i+1}^{-1}\MS_i$ for $1 \leq i \leq L-1$. The dimension of the left nullspace of $\MF^{(n)}$ is $s$ and all the $s$ rows of $\MS$ are independent,  so these rows span the left nullspace of $\MF^{(n)}$. Therefore for the inclusion, we must have $\MS\MW^{(n)} =\M0.$

Thus, proving the existence of communication coefficients $\MA_i$'s  that achieve perfect omniscience and perfect alignment is equivalent to proving the existence of  $\MS_i$'s  that are invertible and satisfy $[\MS_1 \ldots \MS_L]\MW^{(n)} =\M0$. To do this, we use the probabilistic method.  Consider the system of equations $[\Ry_1 \ldots \Ry_{sL}]\MW^{(n)} =\M0$ in $sL$ variables, since the matrix $\MW^{(n)}$ has full column rank, the solutions can be described in terms of  $m:=sL- n_w$ free variables. As a result, any $\MS$ that satisfies $\MS\MW^{(n)}=\M0$ can be parametrized by $ms$ variables. Without loss of generality, we assume that the submatrix of $\MS$ formed by the first $m$ columns has these independent variables, $(\Rs_{i,j}: 1\leq i \leq s, 1 \leq j \leq m)$. Knowing these entries will determine the rest of the entries of $\MS$.  So we choose $\Rs_{i,j}$'s independently and uniformly from $\bb{F}_{q^n}$.  We would like to know if there is any realization such that all the $\MS_i$'s are invertible which is equivalent to the condition $\prod_{i=1}^{L} \det(\MS_i)\neq 0$. Note that $\prod_{i=1}^{L} \det(\MS_i)$ is a multivariate polynomial in the variables, $(\Rs_{i,j}: 1\leq i \leq s, 1 \leq j \leq m)$ with degree atmost $sL$. Furthermore the polynomial is not identically zero, which follows from the irreducibility of $\MW^{(n)}$. The proof of this fact is given in  Lemma~\ref{lem:nonzeropoly} in  appendix \ref{app:nonzerodet}. Therefore, applying the  Schwartz-Zippel lemma (Lemma~\ref{lem:sz} in Appendix~\ref{app:nonzerodet}), we have
\begin{align*}
 \Pr\left\lbrace \prod_{i=1}^{L} \det(\MS_i)\neq 0\right\rbrace \geq 1- \frac{sL}{q^n} \stackrel{(a)}{>} 0 \\
\end{align*}  
where $(a)$ follows from the choice $n > \log_q(sL)$.  Since the probability is strictly positive, there exists a realization of $\MS$ such that $\MS \MW^{(n)} = 0$ and $\MS_i$'s are invertible which in turn shows the existence of a desired $\MF^{(n)}$.
\else
\subsection{Proof of Lemma~\ref{lem:irred}}
In this proof, we first identify an edge whose m.c.f. with the wiretapper's observations is a non-constant function. Then, by appropriately transforming the source, we separate out the m.c.f. from the random variables corresponding to the edge and the wiretapper. Later we argue that the source can be reduced by removing the m.c.f. component entirely without affecting  $\wskc$ and $\rl$. And we repeat this process until the source becomes irreducible. At each stage, to show that the reduction indeed leaves the m.c.f. related to the other edges  unchanged and makes the m.c.f. of the reduced edge a constant function, we  use  the following lemma which is proved in Appendix~\ref{app:nonzerodet}.
\begin{lemma}\label{lem:indgk}
 If $(\RX,\RY)$ is independent of $\RZ$, then  $\op{mcf}(\RX, (\RY,\RZ)) = \op{mcf}(\RX,\RY)$ and $\op{mcf}((\RX,\RZ), (\RY,\RZ)) = (\op{mcf}(\RX,\RY) ,\RZ)$.
\end{lemma}

 Since $(\RZ_V, \RZ_{\opw})$ is not irreducible, there exists an edge $e \in E$ such that $\RG_e := \op{mcf}(\RY_e, \RZ_{\opw})$ is a non-constant function. By using the result that the m.c.f. of a finite linear source is a linear function~\cite{chan18zero}, we can write $\RG_e =\RY_e \MM_e  =\RZ_{\opw} \MM_{\opw}$ for some full column-rank matrices, $\MM_e$ and $\MM_{\opw}$ over $\Fq$. 

We will appropriately transform the random vector $\RY_e$. Let $\MN_e$ be any matrix with full column-rank such that $\bM \MM_e \mid  \MN_e \eM$ is invertible. Define $\tRY_e := \RY_e \MN_e$, then
\begin{align*}
  \bM \RX_{e,1},\ldots,\RX_{e,n_e}\eM \bM \MM_e \mid  \MN_e \eM & = \RY_e \bM \MM_e \mid  \MN_e \eM \\
  &=\bM \RG_e, \tRY_e \eM\\
  &= \bM \RG_{e,1},\ldots,\RG_{e,\ell}, \tRX_{e,1},\ldots,\tRX_{e,\tilde{n}_e} \eM
\end{align*}
where $\tRY_e = [\tRX_{e,1},\ldots,\tRX_{e,\tilde{n}_e}]$, $\RG_e = [\RG_{e,1},\ldots,\RG_{e,\ell}]$, $\ell$ is the length of the vector $\RG_e$ and $\tilde{n}_e = n_e -\ell$. Therefore, we can obtain $(\RG_e, \tRY_e)$ by an invertible linear transformation of $\RY_e$. Note that the components $ \RG_{e,1},\ldots,\RG_{e,\ell}, \tRX_{e,1},\ldots,\tRX_{e,\tilde{n}_e}$ are also  i.i.d. random variables that are uniformly distributed over $\Fq$, and they are independent of $ \RY_{E \setminus \{e\}}:=(\RY_b: b \in E \setminus \{e\}))$. Hence $\RG_e$ is independent of $\tRY_e$ and $\RY_{E \setminus \{e\}}$.

Now we will express $\RZ_{\opw}$ in terms $\RG_e$ and $\tRY_e$.
\begin{align*}
 \RZ_{\opw} &= \RX \MW\\
 & =\RY_e\MW_e + \RY_{E \setminus \{e\}} \MW_{E \setminus \{e\}}\\
 &= \bM \RG_e & \tRY_e\eM\bM \MM_e   \MN_e \eM^{-1}\MW_e + \RY_{E \setminus \{e\}} \MW_{E \setminus \{e\}}\\
 &= \RG_e \MW^{'}_e +\tRY_e \MW^{''}_e + \RY_{E \setminus \{e\}} \MW_{E \setminus \{e\}}
\end{align*}
where the  matrices $\MW_e$ and $\MW_{E \setminus \{e\}}$ are sub-matrices  of $\MW$ formed by rows corresponding to $e$ and $E \setminus \{e\}$ respectively. Also, the matrices $\MW^{'}_e$ and $\MW^{''}_e$ are sub-matrices  of $\bM \MM_e   \MN_e \eM^{-1}\MW_e$ formed  by first $\ell$ rows and last $\tilde{n}_e$ rows respectively. Define $\tRZ_{\opw}:=\tRY_e \MW^{''}_e + \RY_{E \setminus \{e\}} \MW_{E \setminus \{e\}}$. Since $\RZ_{\opw}= \bM\RG_e & \tRZ_{\opw}\eM\bM  \MW^{'}_e \\ \MI \eM$ and $\bM\RG_e & \tRZ_{\opw}\eM = \RZ_{\opw}\bM \MM_{\opw} & \MI -\MM_{\opw}\MW^{'}_e \eM$, $\bM \RG_e & \tRZ_{\opw} \eM$ can be obtained by an invertible linear transformation of $\RZ_{\opw}$.

Since the transformations are invertible, $\RY_e$ and $\RZ_{\opw}$ can equivalently be written as $(\RG_e, \tRY_e)$ and $(\RG_e, \tRZ_{\opw} )$ respectively. We will see that $\RG_e$ can be removed from the source without affecting $\wskc$ and $\rl$.  Let us consider a  new tree-PIN  source $\tRZ_V$, which is same as $\RZ_V$ except that  $\tRY_e$ and $\tilde{n}_e$ are associated to the edge $e$, and the wiretapper side information is  $\tRZ_{\opw}$. Note that $(\tRZ_V, \tRZ_{\opw})$ is also a tree-PIN source with linear wiretapper, and $\RG_e$ is independent of $(\tRZ_V, \tRZ_{\opw})$.

 For the edge $e$, $\op{mcf}(\tRY_e, \tRZ_{\opw})$ is a  constant function. Suppose if it were a non-constant function $\tRG_e$ w.p. 1, which  is indeed independent of $\RG_e$, then $\op{mcf}(\RY_e, \RZ_{\opw}) = \op{mcf}((\RG_e, \tRY_e), (\RG_e,\tRZ_{\opw}))= (\RG_e, \tRG_e)$. The last equality uses Lemma~\ref{lem:indgk}. Therefore, $H(\RG_e) =H(\op{mcf}(\RY_e, \RZ_{\opw}))=H(\RG_e, \tRG_e) >  H(\RG_e)$, which is a contradiction.  Moreover $H(\RY_e|\op{mcf}(\RY_e, \RZ_{\opw}))= H(\RY_e|\RG_e) =H(\tRY_e, \RG_e|\RG_e) = H(\tRY_e)$. For the other edges $b \neq e$, $\tRY_b = \RY_b$ and $\op{mcf}(\tRY_b,\tRZ_{\opw})= \op{mcf}(\RY_b,\tRZ_{\opw})= \op{mcf}(\RY_b, (\RG_e,\tRZ_{\opw})) = \op{mcf}(\RY_b, \RZ_{\opw})$, which follows from Lemma~\ref{lem:indgk}.
 
Now we will verify that $\wskc$ and $\rl$ do not change.  First let us show that $\rl(\RZ_V||\RZ_{\opw}) \leq \rl(\tRZ_V||\tRZ_{\opw})$ and $\wskc(\RZ_V|| \RZ_{\opw}) \geq \wskc(\tRZ_V||\tRZ_{\opw})$.
Let $\tRF^{(n)}$  be  an optimal communication  for $\rl(\tRZ_V||\tRZ_{\opw})$. We can make use of $\tRF^{(n)}$ to construct an omniscience communication for the source $(\RZ_V,\RZ_{\opw})$. Set  $\RF^{(n)}= (\RG_e^n, \tRF^{(n)})$. This communication is made as follows. Both the terminals incident on the edge $e$  have $\RY_e^n$ or equivalently $(\RG_e^n, \tRY_e^n)$. One of them  communicates $\RG_e^n$. In addition, all the terminals communicate according to $\tRF^{(n)}$ because for every user $i$,  $\tRZ_i^n$ is recoverable from $\RZ_i^n$. It is easy to verify that this is an omniscience communication for $(\RZ_V,\RZ_{\opw})$.
The minimum rate of leakage for omniscience 
\begin{align*}
\rl(\RZ_V||\RZ_{\opw})&\leq \frac{1}{n}I(\RZ_V^n;  \RF^{(n)}|\RZ_{\opw}^n)\\ &= \frac{1}{n}I(\RZ_V^n;  \RG_e^n, \tRF^{(n)}|\RZ_{\opw}^n)\\
&\utag{a}= \frac{1}{n}I(\tRZ_V^n,\RG_e^n;  \RG_e^n, \tRF^{(n)}|\tRZ_{\opw}^n, \RG_e^n) \\ &= \frac{1}{n}I(\tRZ_V^n; \tRF^{(n)}|\tRZ_{\opw}^n, \RG_e^n) \\
&\utag{b}= \frac{1}{n}I(\tRZ_V^n; \tRF^{(n)}|\tRZ_{\opw}^n) \approx \rl(\tRZ_V||\tRZ_{\opw}),
\end{align*}
where \uref{a} is due to the fact that $(\RG_e, \tRZ_{\opw})$ is obtained by a linear invertible transformation of $\RZ_{\opw}$ and \uref{b} follows from the independence of $\RG_e$ and $(\tRZ_V, \tRZ_{\opw})$. It shows that  $\rl(\RZ_V||\RZ_{\opw}) \leq \rl(\tRZ_V||\tRZ_{\opw})$. Similarly, let $(\tRF^{(n)},\tRK^{(n)})$ be a communication and key pair  which is optimal  for  $\wskc(\tRZ_V||\tRZ_{\opw})$. By letting $(\RF^{(n)},\RK^{(n)})=( \tRF^{(n)}, \tRK^{(n)})$ for the source $(\RZ_V, \RZ_{\opw})$, we can see that the key recoverability condition is satisfied. Thus $(\RF^{(n)},\RK^{(n)})$ constitute a valid SKA scheme for $(\RZ_V, \RZ_{\opw})$ which implies that $\wskc(\RZ_V|| \RZ_{\opw}) \geq \wskc(\tRZ_V||\tRZ_{\opw})$. 

To prove the  reverse inequalities, $\rl(\RZ_V||\RZ_{\opw}) \geq \rl(\tRZ_V||\tRZ_{\opw})$ and $\wskc(\RZ_V|| \RZ_{\opw}) \leq \wskc(\tRZ_V||\tRZ_{\opw})$, we use the idea of simulating  source $(\RZ_V, \RZ_{\opw})$ from  $(\tRZ_V,\tRZ_{\opw})$. Consider the source $(\tRZ_V,\tRZ_{\opw})$ in which one of the terminals $i$ incident on the edge $e$, generates the randomness $\RG_e$ that is independent of the source and broadcasts it, after which the other terminal $j$ incident on $e$ and the wiretapper has $\RG_e$. These two terminals $i$ and $j$ simulate $\RY_e$ from $\tRY_e$ and $\RG_e$, whereas the other terminals observations are same as those of $\RZ_V$.  Hence  they can communicate according to $\RF^{(n)}$ on the simulated source $\RZ_V$.  If  $\RF^{(n)}$  achieves omniscience for $\RZ_V^n$ then so is $\tRF^{(n)}=(\RG_e^n, \RF^{(n)})$ for $\tRZ_V^n$ . Therefore the omniscience recoverability condition is satisfied. The minimum  rate of leakage for omniscience,
\begin{align*}
\rl(\tRZ_V||\tRZ_{\opw})&\leq \frac{1}{n}I(\tRZ_V^n;  \tRF^{(n)}|\tRZ_{\opw}^n)\\
&= \frac{1}{n}I(\tRZ_V^n;  \RG_e^n, \RF^{(n)}|\tRZ_{\opw}^n)\\
&= \frac{1}{n}I(\tRZ_V^n;  \RG_e^n|\tRZ_{\opw}^n)+\frac{1}{n}I(\tRZ_V^n;  \RF^{(n)}|\tRZ_{\opw}^n,\RG_e^n)\\
&\utag{a}= \frac{1}{n}I(\tRZ_V^n,\RG_e^n;  \RF^{(n)}|\tRZ_{\opw}^n,\RG_e^n) \\
&\utag{b}= \frac{1}{n}I(\RZ_V^n;  \RF^{(n)}|\RZ_{\opw}^n) \\
&\approx \rl(\RZ_V||\RZ_{\opw}),
\end{align*}
where \uref{a} follows from the independence of $\RG_e$ and $(\tRZ_V, \tRZ_{\opw})$ and \uref{b} is because $(\RG_e, \tRZ_{\opw})$ can be obtained by a linear invertible transformation of $\tRZ_{\opw}$.
This shows that $\rl(\RZ_V||\RZ_{\opw}) \geq \rl(\tRZ_V||\tRZ_{\opw})$. Similarly, if  $(\RF^{(n)}, \RK^{(n)})$ is a communication and key pair for $(\RZ_V, \RZ_{\opw})$ then terminals can communicate according to  $\tRF^{(n)}= (\RG_e^n, \RF^{(n)})$ and agree upon the key $\tRK^{(n)}= \RK^{(n)}$, which is possible due to simulation. Hence the key recoverability is immediate. The secrecy condition is also satisfied because $ I(\tRK^{(n)};  \tRF^{(n)}, \tRZ_{\opw}^n) = I(\RK^{(n)};  \RF^{(n)}, \RG_e^n, \tRZ_{\opw}^n) = I(\RK^{(n)};  \RF^{(n)}, \RZ_{\opw}^n) $. Hence $(\tRF^{(n)},\tRK^{(n)})$ forms a valid SKA scheme for $(\tRZ_V, \tRZ_{\opw})$ which implies that $\wskc(\RZ_V|| \RZ_{\opw}) \geq \wskc(\tRZ_V||\tRZ_{\opw})$.

We have shown that $\rl(\RZ_V||\RZ_{\opw}) = \rl(\tRZ_V||\tRZ_{\opw})$,  $\wskc(\RZ_V|| \RZ_{\opw}) = \wskc(\tRZ_V||\tRZ_{\opw})$  and  for the edge $e$, $\op{mcf}(\tRY_e, \tRZ_{\opw})$ is a constant function and $H(\RY_e|\op{mcf}(\RY_e, \RZ_{\opw}))= H(\tRY_e)$. Furthermore, we have shown that this  reduction does not change the m.c.f. of the $\RY_b$, which is unaffected by the reduction when $b \neq e$,  and  $\tRZ_{\opw}$, side information of the reduced wiretapper. Since  $(\tRZ_V, \tRZ_{\opw})$ is also a tree-PIN source with linear wiretapper, we can repeat this process, if it is not irreducible, until the source becomes irreducible without affecting $\wskc$ and $\rl$.

\subsection{Proof of Theorem~\ref{thm:cwsk:irred}}
\emph{Converse part.} An  upper bound on  $\wskc$  is $\skc$, because no wiretapper side information can only increase the key generation ability of users. It was shown in \cite[Example 5]{csiszar04} that if the random variables of a source form a Markov chain on a tree, then $\skc = \min_{(i,j) : \{i,j\} = \xi(e) } I(\RZ_i ; \RZ_j)$. In the tree-PIN case, which satisfies the Markov property, this turns out to be $\skc=\min_{e \in E} H(\RY_e)$. As consequence, we have $\wskc \leq \min_{e \in E} H(\RY_e)$ and 
\begin{align}\label{eq:rl:conv}
\begin{split}
 \rl&\utag{a}\geq H(\RZ_V|\RZ_{\opw}) -\wskc \\ 
 &\utag{b}= \left(\sum_{e \in E}n_e -n_w\right)\log_2q -\wskc \\
 &\geq \left(\sum_{e \in E}n_e -n_w\right)\log_2q  - \min_{e \in E} H(\RY_e)
 \end{split}
\end{align}\\
where \uref{a} follows from Proposition~\ref{thm:RL:lb} and \uref{b} is due to the full column-rank assumption on $\MW$.

\emph{Achievability part.}  In this section, we will show the existence of an omniscience scheme with leakage rate $\left(\sum_{e \in E}n_e -n_w\right)\log_2q  - \min_{e \in E} H(\RY_e)$. Hence $\rl \leq \left(\sum_{e \in E}n_e -n_w\right)\log_2q  - \min_{e \in E} H(\RY_e)$,  which together with the chain of inequalities~\eqref{eq:rl:conv} imply that $\wskc = \min_{e \in E} H(\RY_e)=\skc$  and $\rl =\left(\sum_{e \in E}n_e -n_w\right)\log_2q- \skc$. In particular, for achieving a secret key of rate $\wskc = \min_{e \in E} H(\RY_e)$, the terminals use privacy amplification on the recovered source.

In fact, the existence of an omniscience scheme is shown by first constructing a template for the communication with desired properties and then showing the existence of an instance of it by random coding. The  following are the key components involved in this construction.
\begin{enumerate}
\item \emph{Deterministic scheme:} A scheme is said to be deterministic if  terminals are  not allowed to use any locally generated private randomness. 
 \item \emph{Perfect omniscience~\cite{sirinperfect}:} For a fixed $n \in \bb{N}$, $\RF^{(n)}$ is said to achieve perfect omniscience if  terminals can recover the source $\RZ_V^n$ perfectly, i.e., $H(\RZ_V^n|\RF^{(n)}, \RZ_i^n) =0$ for all $i \in V$. If we do not allow any private randomness, then $H( \RF^{(n)} |  \RZ_V^n) = 0$, which implies
  \begin{align*}\label{eq:perfectomni}
  \begin{split}
   \frac{1}{n}  I(\RZ_V^n\wedge \RF^{(n)} | \RZ_{\opw}^n) &= \frac{1}{n}\left [H( \RF^{(n)} | \RZ_{\opw}^n) - H( \RF^{(n)} | \RZ_{\opw}^n, \RZ_V^n) \right] \\&= \frac{1}{n}H( \RF^{(n)} | \RZ_{\opw}^n).
  \end{split}
\end{align*} 
\item \emph{Perfect alignment:} For an $n \in \bb{N}$, we say that $\RF^{(n)}$ perfectly aligns with $\RZ_{\opw}^n$ if $H( \RZ_{\opw}^n|\RF^{(n)} ) = 0$. Note that $\RZ_{\opw}^n$ is only recoverable from $\RF^{(n)}$ but not the other way around. In this case, $H( \RF^{(n)} | \RZ_{\opw}^n) =H( \RF^{(n)}) - H( \RZ_{\opw}^n)$. In an FLS, the wiretapper side information is $\RZ_{\opw}^n = \RX^n \MW^{(n)}$ where $\RX$ is the base vector. Suppose the communication is of the form $\RF^{(n)} = \RX^n \MF^{(n)}$, for some matrix $\MF^{(n)}$, then the condition of  perfect alignment is equivalent to the condition that the column space of $\MF^{(n)}$ contains the column space of $\MW^{(n)}$. This is in turn equivalent to the condition that the left nullspace of $\MW^{(n)}$ contains the left nullspace of $\MF^{(n)}$, i.e., if $\Ry \MF^{(n)}=\R0$ for some vector $\Ry$ then $\Ry \MW^{(n)}=\R0$.
\end{enumerate}
So we will construct a linear communication scheme (deterministic), for some fixed $n$, achieving both perfect omniscience and perfect alignment. As a consequence,  the leakage rate for omniscience is equal to $\frac{1}{n}  I(\RZ_V^n\wedge \RF^{(n)} | \RZ_{\opw}^n) = \frac{1}{n}H( \RF^{(n)} | \RZ_{\opw}^n) = \frac{1}{n}[H( \RF^{(n)}) - H( \RZ_{\opw}^n)] = \frac{1}{n}H( \RF^{(n)}) - n_w\log_2q$. To show the desired rate, it is enough to have $\frac{1}{n}H( \RF^{(n)}) = \left(\sum_{e \in E}n_e\right) \log_2q  - \min_{e \in E} H(\RY_e) $.

This construction is given separately for multiple cases for the ease of understanding.  We start with the special case $n_e=s$ for all $e \in E$. First we consider a PIN model defined on a path graph. Then we extend it to the tree-PIN case by using the fact that there exists a unique path from any vertex to the root of the tree. Later we move to the case of arbitrary $n_e$.

\subsubsection{Path with length $L$ and $n_e=s$ for all $e \in E$}
 Let $V= \{0,1,\ldots,L\}$ be the set of vertices and $E=\{1,\ldots,L\}$ be the edge set such that edge $i$ is incident on  vertices $i-1$ and $i$. Since $n_e =s$, $\min_{e \in E} H(\RY_e)=s \log_2q$. Fix a positive integer $n$,  such that $n > \log_q(sL)$. With $n$ i.i.d. realizations of the source, the vector corresponding to edge $i$ can be expressed as $\RY_i^{n} =[ \RX^n_{i,1} \ldots \RX^n_{i,s}]$ where $\RX^n_{i,j}$'s  can be viewed as element in $\bb{F}_{q^n}$. Hence $\RY_i^{n} \in (\bb{F}_{q^n})^s$.  The goal is to construct a linear communication scheme $\RF^{(n)}$ that achieves both perfect omniscience and perfect alignment simultaneously such that $H( \RF^{(n)}) =n \left[ \left(\sum_{e \in E}n_e\right) \log_2q  - \min_{e \in E} H(\RY_e)\right] = n  \left(sL - s\right) \log_2q$.  
 
 Now we will construct the  communication as follows. Leaf nodes $0$ and $L$ do not communicate. The internal node $i$ communicates $\tRF_i^{(n)} = \RY^n_{i} + \RY^n_{i+1}\MA_{i}$, where $\MA_{i}$ is an $s \times s$ matrix with elements from $\bb{F}_{q^n}$. This  communication is of the form
\begin{align*}
\RF^{(n)} & = \begin{bmatrix}
\tRF_1^{(n)} \cdots \tRF_{L-1}^{(n)}
\end{bmatrix} \\ &= \begin{bmatrix}
\RY_1^{n}\cdots \RY_L^{n}
\end{bmatrix} \underbrace{\begin{bmatrix}
\MI & \M0& \cdots  &\M0&\M0\\
\MA_1&\MI &  \cdots &\M0&\M0\\
\M0&{\MA_2} & \cdots &\M0&\M0\\
\vdots&\vdots&\ddots&\vdots&\vdots\\
 \M0 &\M0&\cdots&\MA_{L-2}&\MI  \\
\M0 &\M0&\cdots& \M0&\MA_{L-1} \\
\end{bmatrix}}_{:=\MF^{(n)}}
\end{align*}
Here $\MF^{(n)}$ is an $sL \times s(L-1) $ matrix over $\bb{F}_{q^n}$. Observe that $\rank_{\bb{F}_{q^n}}(\MF^{(n)})= s(L-1)$, which implies that $H( \RF^{(n)}) =\left(sL - s\right) \log_2q^n$ and  the dimension of the left nullspace of $\MF^{(n)}$ is $s$. Now the communication coefficients, $(\MA_i : 1\leq i\leq L-1)$, have to be chosen such that $\RF^{(n)}$ achieves both perfect omniscience and perfect alignment. Let us derive some conditions on these matrices.

Perfect omniscience  is equivalent to the condition that the $\MA_i$'s  are invertible. The Necessity of the invertibility condition is immediate since if $\MA_{L-1}$ were not invertible, then vector $\RY_L^n$ is not completely recoverable from the communication by some users, for instance, user $0$. Sufficiency follows by observing that for any $i \in V$, $[\MF^{(n)} \mid \MH_i]$  is  full rank,  where $\MH_i$ is a block-column vector with an identity matrix at location $i$ and zero matrix in the rest of the locations. In other words, $(\RY_1^{n}\cdots \RY_L^{n})$ is recoverable from $(\RF^{(n)},  \RY_i^n)$ for any $i \in E$, hence achieving omniscience. So we assume that the $\MA_i$'s are invertible.

 For perfect alignment, we require that the left nullspace of $\MF^{(n)}$ is contained in  the left nullspace of $\MW^{(n)}$, which is the wiretapper matrix corresponding to $n$ i.i.d. realizations. Note that $\MW^{(n)}$ is a $\left(\sum_{e \in E} n_e\right) \times n_w$ matrix over $\bb{F}_{q^n}$ with entries $\MW^{(n)}(k,l) = \MW(k,l) \in  \bb{F}_{q}$; since $\bb{F}_{q} \subseteq \bb{F}_{q^n}$, $\MW^{(n)}(k,l) \in \bb{F}_{q^n}$. As pointed out before, the dimension of the  left nullspace of $\MF^{(n)}$ is $s$ whereas the dimension of the left nullspace of $\MW^{(n)}$ is $sL-n_w$. Since the source is irreducible, it follows from Lemma~\ref{lem:upbdirred} in Appendix \ref{app:nonzerodet} that $s \leq sL-n_w$. Since the dimensions are appropriate, the left nullspace inclusion condition is not impossible. Observe that 
\begin{align*}
\underbrace{\begin{bmatrix}
\MS_1 & -\MS_1\MA_1^{-1} &
\cdots&
(-1)^{L-1}\MS_1\MA_{1}^{-1}\ldots \MA_{L-1}^{-1}
\end{bmatrix}}_{:=\MS} \MF^{(n)}=\M0.
\end{align*}
where  $\MS_1$  is some invertible matrix. We write $\MS = [\MS_1 \ldots \MS_L]$ with $\MS_{i+1} :=(-1)^{i}\MS_1\MA_{1}^{-1}\ldots \MA_{i}^{-1}$ for $1 \leq i \leq L-1$. Notice that the $\MS_i$'s are invertible. We can also express the $\MA_i$'s in terms of the $\MS_i$'s as $\MA_i= -\MS_{i+1}^{-1}\MS_i$ for $1 \leq i \leq L-1$. The dimension of the left nullspace of $\MF^{(n)}$ is $s$ and all the $s$ rows of $\MS$ are independent,  so these rows span the left nullspace of $\MF^{(n)}$. Therefore for the inclusion, we must have $\MS\MW^{(n)} =\M0.$

Thus, proving the existence of communication coefficients $\MA_i$'s  that achieve perfect omniscience and perfect alignment is equivalent to proving the existence of  $\MS_i$'s  that are invertible and satisfy $[\MS_1 \ldots \MS_L]\MW^{(n)} =\M0$. To do this, we use the probabilistic method.  Consider the system of equations $[\Ry_1 \ldots \Ry_{sL}]\MW^{(n)} =\M0$ in $sL$ variables, since the matrix $\MW^{(n)}$ has full column rank, the solutions can be described in terms of  $m:=sL- n_w$ free variables. As a result, any $\MS$ that satisfies $\MS\MW^{(n)}=\M0$ can be parametrized by $ms$ variables. Without loss of generality, we assume that the submatrix of $\MS$ formed by the first $m$ columns has these independent variables, $(\Rs_{i,j}: 1\leq i \leq s, 1 \leq j \leq m)$. Knowing these entries will determine the rest of the entries of $\MS$.  So we choose $\Rs_{i,j}$'s independently and uniformly from $\bb{F}_{q^n}$.  We would like to know if there is any realization such that all the $\MS_i$'s are invertible which is equivalent to the condition $\prod_{i=1}^{L} \det(\MS_i)\neq 0$. Note that $\prod_{i=1}^{L} \det(\MS_i)$ is a multivariate polynomial in the variables, $(\Rs_{i,j}: 1\leq i \leq s, 1 \leq j \leq m)$ with degree atmost $sL$. Furthermore the polynomial is not identically zero, which follows from the irreducibility of $\MW^{(n)}$. The proof of this fact is given in  Lemma~\ref{lem:nonzeropoly} in  appendix \ref{app:nonzerodet}. Therefore, applying the  Schwartz-Zippel lemma (Lemma~\ref{lem:sz} in Appendix~\ref{app:nonzerodet}), we have
\begin{align*}
 \Pr\left\lbrace \prod_{i=1}^{L} \det(\MS_i)\neq 0\right\rbrace \geq 1- \frac{sL}{q^n} \stackrel{(a)}{>} 0 \\
\end{align*}  
where $(a)$ follows from the choice $n > \log_q(sL)$.  Since the probability is strictly positive, there exists a realization of $\MS$ such that $\MS \MW^{(n)} = 0$ and $\MS_i$'s are invertible which in turn shows the existence of a desired $\MF^{(n)}$.

\subsubsection{Tree with $L$ edges and $n_e=s$ for all $e \in E$}
For tree-PIN model, we essentially use the same kind of communication construction as that of the path model.  Consider a PIN model on a tree with $L+1$ nodes and $L$ edges. To describe the linear communication, fix some leaf node as the root, $\rho$, of the tree. For any internal node $i$ of the tree, let $E_i$ denote the edges incident with $i$, and in particular, let $e^*(i)\in E_i$ denote the edge incident with $i$ that is on the unique path between $i$ and $\rho$. Fix a positive integer $n$,  such that $n > \log_q(sL)$. The communication from an internal node $i$ is  $(  \RY^n_{e^*(i)} + \RY^n_{e}\MA_{i,e}: e \in E_i \setminus \{e^*(i)\})$, where $\MA_{i,e}$ is an $s \times s$ matrix.  Each internal node communicates $s(d_i - 1)$ symbols from $\bb{F}_{q^n}$, where $d_i$ is the degree of the node $i$. Leaf nodes do not communicate. The total number of $\bb{F}_{q^n}$-symbols communicated is $\sum_i s(d_i-1)$, where the sum is over all nodes, including leaf nodes. The contribution to the sum from leaf nodes is in fact $0$, but including all nodes in the sum allows us to evaluate the sum as $s[2 \times(\text{number of edges}) - (\text{number of nodes})] = s(L-1)$.
 Thus, we have the overall communication of the form 
\begin{align*}
 \RF^{(n)} = \RY^n \MF^{(n)} 
\end{align*}
 where $\MF^{(n)} $ is a $sL \times s(L-1)$ matrix over $\bb{F}_{q^n}$ and $\RY^n = (\RY^n_e)$. The rows of $\MF^{(n)}$  correspond to the edges  of the tree.  The aim is to  choose the matrices $\MA_i$ that achieves both perfect omniscience and perfect alignment simultaneously such that $H( \RF^{(n)}) =n \left[ \left(\sum_{e \in E}n_e\right) \log_2q  - \min_{e \in E} H(\RY_e)\right] = n  \left(sL - s\right) \log_2q$.  
 
For perfect omniscience,  it is sufficient for the $\MA_i$'s to be     invertible. First observe that all the leaf nodes are connected to  the root node $\rho$  via paths. On each of these paths the communication has exactly the same form as that of the path model considered before. So when the $\MA_i$'s are invertible, the root node can recover the entire source using  $\RY_{e_\rho}^n$, where $e_\rho$ is the edge incident on $\rho$. Now take any node $i$, there is a unique path from $i$ to $\rho$. Again the form of the communication restricted to this path is same as that of the path model. Hence node $i$,  just using $\RY_{e^*(i)}^n$ can  recover $\RY_{e_\rho}^n$ , which in turn, along with the overall communication, allows node $i$ to recover the entire source. Indeed, only edge observations $\RY_e^n$ are used in the recovery process.  

 Because $\RY^n$ is recoverable from $(\RF^{(n)},  \RY_e^n)$ for any $e \in E$, $[\MF^{(n)} \mid \MH_e]$ is an invertible $sL \times sL$ matrix, where $\MH_e$ is a block-column vector with an $s \times s$ identity matrix at location corresponding to edge $e$ and zero matrix in the rest of the locations. Therefore $\MF^{(n)}$ is a full column-rank matrix, i.e., $\rank_{\bb{F}_{q^n}}(\MF^{(n)})= s(L-1)$, which implies that $H( \RF^{(n)}) =\left(sL - s\right) \log_2q^n$ and  the dimension of the left nullspace of $\MF^{(n)}$ is $s$.

For perfect alignment, we require that the left nullspace of $\MF^{(n)}$ is contained in  the left nullspace of $\MW^{(n)}$. So, let us construct an $\MS = (\MS_e)$ such that  $ \MS\MF^{(n)}=\M0$ as follows. Let $\MS_1$ be an invertible matrix. Each edge $e$ has two nodes incident with it; let $i^*(e)$ denote the node that is closer to the root $\rho$. There is a unique path $i^*(e) = i_1 \longrightarrow i_2 \longrightarrow  \cdots \longrightarrow i_{\ell} = \rho$ that connects $i^*(e)$ to $\rho$ and let the edges along the path in this order is $(e=e_1, e_2,\ldots, e_{\ell})$. We set $\MS_e := (-1)^{\ell-1} \MS_1 \MA^{-1}_{i_{\ell -1}, e_{\ell-1}} \ldots \MA^{-1}_{i_{1}, e_{1}} $ for all edges $e$ except for the edge incident with $\rho$, to which we associate $S_1$. Note that the $\MS_e$'s are invertible and $\MS_e= - \MS_{e^{\#}}\MA^{-1}_{i^*(e), e}$, where  $e^{\#}$ is the edge adjacent to $e$ on the unique path from $i^*(e)$ to $\rho$. Let us now verify that $\MS \MF^{(n)} = \M0$. The  component corresponding to the internal node $i$ in $\MS \MF^{(n)}$ is of the form $(\MS_{e^*(i)} + \MS_{e}\MA_{i,e}: e \in E_i \setminus \{e^*(i)\})$. But for an  $e \in E_i \setminus \{e^*(i)\}$, $i^{*}(e) = i$ and $e^{\#} = e^*(i)$, thus $\MS_{e}\MA_{i,e} = - \MS_{e^{\#}}\MA^{-1}_{i^*(e), e}\MA_{i,e}= - \MS_{e^*(i)}\MA^{-1}_{i, e}\MA_{i,e} =- \MS_{e^*(i)}$. Hence we have  $\MS_{e^*(i)} + \MS_{e}\MA_{i,e}=\M0$ which implies $\MS \MF^{(n)} = \M0$.
The dimension of the left nullspace of $\MF^{(n)}$ is $s$ and all the $s$ rows of $\MS$ are independent,  so these rows span the left nullspace of $\MF^{(n)}$. Therefore for the inclusion, we must have $\MS\MW^{(n)} =\M0$.

Finally, we can prove the existence of $\MS$ such that $\MS\MW^{(n)} =\M0$ and $\MS_i$'s are invertible, using the probabilistic method exactly as before. The details are omitted.  This shows the existence of a desired $\MF^{(n)}$.

\subsubsection{Path with length $L$  and arbitrary $n_e$}
Define $s := \min\{n_e: e \in E\}$. In this case, the communication consists of two parts. One part involves the communication that is similar to that of the $n_e =s$ case, where we use first $s$ random variables associated to each edge $e$. The other part involves revealing the rest of the random variables on each edge, but this is done by linearly combining them with the first $s$ rvs. 

Let $V= \{0,1,\ldots,L\}$ be the set of vertices and $E=\{1,\ldots,L\}$ be the edge set such that edge $i$ is incident on  vertices $i-1$ and $i$. Fix a positive integer $n$,  such that $n > \log_q(sL)$.  As before, with $n$ i.i.d. realizations of the source, the vector corresponding to edge $i$ can be expressed as $\RY_i^{n} =[ \RX^n_{i,1} \ldots \RX^n_{i,s} \RX^n_{i,s+1}\ldots \RX^n_{i,n_i}]$ where $\RX^n_{i,j}$'s are viewed as element in $\bb{F}_{q^n}$. Hence $\RY_i^{n} \in (\bb{F}_{q^n})^{n_i}$. Since $s = \min\{n_e: e \in E\}$, we have $\min_{e \in E} H(\RY_e)=s \log_2q$.  The goal is again to construct a linear communication scheme $\RF^{(n)}$ that achieves both perfect omniscience and perfect alignment simultaneously such that $H( \RF^{(n)}) =n \left[ \left(\sum_{e \in E}n_e\right) \log_2q  - \min_{e \in E} H(\RY_e)\right] = n  \left( \sum_{e \in E}n_e  -s \right)\log_2q$. 

Now we will construct the  communication as follows. The leaf node $0$  does not communicate. The internal node $i$ communicates $\tRF_i^{(n)} = \RY^n_{i}\bM \MI &  \MB_{i}\\ \M0 & \MI \eM + \RY^n_{i+1}\bM \MA_{i} & \M0 \\ \M0 & \M0 \eM$, where $\MA_{i}$ is an $s \times s$ matrix and $\MB_{i}$ is an $s \times (n_i -s)$ matrix with elements from $\bb{F}_{q^n}$.  The communication from the leaf node $L$ is $\tRF_{L}^{(n)} = \RY^n_{L}\bM \MB_{L} \\ \MI \eM$, where $\MB_{L-1}$ is an $s \times (n_L -s)$ matrix. This  communication is of the form $\RF^{(n)} = \bM
\tRF_1^{(n)}& \cdots& \tRF_{L-1}^{(n)} & \tRF_{L}^{(n)} \eM = \bM \RY_1^{n}\cdots \RY_L^{n} \eM \MF^{(n)} $ where $\MF^{(n)}$ is
\begin{align*}
 \renewcommand{\arraystretch}{1.25}
\left[\begin{array}{c|c|c|c|c}
 \begin{matrix}
  \MI& \MB_1 \\
  \M0 & \MI
  \end{matrix}
  & \M0 & \cdots & \M0 & \M0 \\ \hline
   \begin{matrix}
  \MA_1& \M0 \\
  \M0 & \M0
  \end{matrix} &
  \begin{matrix}
  \MI & \MB_2 \\
  \M0 & \MI
  \end{matrix}
   & \cdots & \M0 & \M0 \\ \hline
  \vdots&\vdots&\ddots&\vdots&\vdots\\ \hline
  \M0 &\M0&\cdots& \begin{matrix}
  \MI & \MB_{L-1} \\
  \M0 & \MI
  \end{matrix}
  &\M0 \\ \hline
\M0 &\M0&\cdots&  \begin{matrix}
  \MA_{L-1}& \M0 \\
  \M0 & \M0
  \end{matrix}& \begin{matrix}
  \MB_{L} \\ \MI
  \end{matrix}
\end{array}\right]
\end{align*}
which is a $\left(\sum_{e \in E}n_e\right) \times  \left(\sum_{e \in E}n_e -s\right) $ matrix over $\bb{F}_{q^n}$. Observe that  $\rank_{\bb{F}_{q^n}}(\MF^{(n)})= \left(\sum_{e \in E}n_e -s\right)$, which implies that $H( \RF^{(n)}) =\left(\sum_{e \in E}n_e -s\right) \log_2q^n$ and  the dimension of the left nullspace of $\MF^{(n)}$ is $s$. Now the communication coefficients, $(\MA_i : 1\leq i\leq L-1)$ and $(B_i : 1\leq i\leq L)$  have to be chosen such that $\RF^{(n)}$ achieves both perfect omniscience and perfect alignment. As before, we derive some conditions on these matrices.

For perfect omniscience, invertibility of $\MA_i$'s  is sufficient with  no additional assumption on $\MB_i$'s. This follows by observing that when all the $\MA_i$'s are invertible then  for any $i \in V$, $[\MF^{(n)} \mid \MH_i]$  is full rank,  where $\MH_i$ is a block-column vector with $\bM \MI & \M0 \eM^T$  at location $i$ and zero matrix in the rest of the locations. In other words, $(\RY_1^{n}\cdots \RY_L^{n})$ is recoverable from $(\RF^{(n)}, (\RX^n_{i,1} \ldots \RX^n_{i,s}))$ for any $i \in E$, which means that the first $s$ random variables of each edge are enough to achieve omniscience. So we assume that the $\MA_i$'s are invertible with no restriction on the $\MB_i$'s.

For perfect alignment, we require that the left nullspace of $\MF^{(n)}$ is contained in the left nullspace of $\MW^{(n)}$. which is the wiretapper matrix corresponding to $n$ i.i.d. realizations. As pointed out earlier, the dimension of the  left nullspace of $\MF^{(n)}$ is $s$ whereas the dimension of the left nullspace of $\MW^{(n)}$ is $\left(\sum_{e \in E}n_e\right)-n_w$. Since the source is irreducible, it follows from Lemma~\ref{lem:upbdirred} in appendix \ref{app:nonzerodet} that $s \leq \left(\sum_{e \in E}n_e\right)-n_w$. Since the dimensions are appropriate, the left nullspace inclusion condition is not impossible. Observe that 
\begin{align*}
\underbrace{ \bM \MS_1 & \MT_1 \mid & \ldots& \mid \MS_L& \MT_L \eM}_{:=\MS} \MF^{(n)}=\M0.
\end{align*}
where  $\MS_1$  is some invertible matrix , $\MS_{i+1} :=(-1)^{i}\MS_1\MA_{1}^{-1}\ldots \MA_{i}^{-1}$ for $1 \leq i \leq L-1$ , $\MT_1 =-\MS_1\MB_{1}$ and $\MT_{i} = (-1)^{i}\MS_1\MA_{1}^{-1}\ldots \MA_{i-1}^{-1}\MB_{i}$ for $2 \leq i \leq L $. Notice that $\MS_i$'s are invertible. We can also express the $\MA_i$'s in terms of $\MS_i$'s as $\MA_i= -\MS_{i+1}^{-1}\MS_i$ for $1 \leq i \leq L-1$, and $\MB_i$'s in terms of $\MS_i$'s and $\MT_i$'s as $\MB_i= -\MS_{i}^{-1}\MT_i$ for $1 \leq i \leq L$. The dimension of the left nullspace of $\MF^{(n)}$ is $s$ and all the $s$ rows of $\MS$ are independent,  so these rows span the left nullspace of $\MF^{(n)}$. Therefore for the inclusion, we must have $\MS\MW^{(n)} =\M0.$

Thus, proving the existence of the communication coefficients $\MA_i$'s and $\MB_i$'s  that achieve perfect omniscience and perfect alignment is equivalent to proving the existence of $\bM \MS_1 & \MT_1 \mid & \ldots& \mid \MS_L& \MT_L \eM$ satisfying $\bM \MS_1 & \MT_1 \mid & \ldots& \mid \MS_L& \MT_L \eM \MW^{(n)} =\M0$ such that the $\MS_i$'s  are invertible. To do this, we use the probabilistic method.  Consider the system of equations $[\Ry_1 \ldots \Ry_{(\sum_{e \in E} n_e)}]\MW^{(n)} =\M0$ in $\left(\sum_{e \in E} n_e\right)$ variables, since the matrix $\MW^{(n)}$  has full column rank, the solutions can be described in terms of  $m:=\left(\sum_{e \in E} n_e\right)- n_w$ free variables. As a result, any $\MS$ that satisfies $\MS\MW^{(n)}=\M0$ can be parametrized by $ms$ variables. Without loss of generality, we assume that the submatrix of $\MS$ formed by the first $m$ columns has these independent variables, $(\Rs_{i,j}: 1\leq i \leq s, 1 \leq j \leq m)$. Knowing these entries will determine the rest of the entries of $\MS$.  So we choose $\Rs_{i,j}$'s independently and uniformly form $\bb{F}_{q^n}$.  We would like to know if there is any realization such that all  $\MS_i$'s are invertible which is equivalent to the condition $\prod_{i=1}^{L} \det(\MS_i)\neq 0$. Note that $\prod_{i=1}^{L} \det(\MS_i)$ is a multivariate polynomial in the variables, $(\Rs_{i,j}: 1\leq i \leq s, 1 \leq j \leq m)$ with degree atmost $sL$. Furthermore the polynomial is not identically zero, which follows from the irreducibility of $\MW^{(n)}$. The proof of this fact is given in  Lemma~\ref{lem:nonzeropoly} in  Appendix~\ref{app:nonzerodet}. Therefore, applying the  Schwartz-Zippel lemma (Lemma~\ref{lem:sz} in Appendix~\ref{app:nonzerodet}), we have
\begin{align*}
 \Pr\left\lbrace \prod_{i=1}^{L} \det(\MS_i)\neq 0\right\rbrace \geq 1- \frac{sL}{q^n} \stackrel{(a)}{>} 0 \\
\end{align*}  
where $(a)$ follows from the choice $n > \log_q(sL)$.  Since the probability is strictly positive, there exists a realization of $\MS$ such that $\MS \MW^{(n)} = 0$ and $\MS_i$'s are invertible which in turn shows the existence of a desired $\MF^{(n)}$.

\subsubsection{Tree with $L$ edges and arbitrary $n_e$}
For this general most case, we construct a communication scheme similar to that of the general path model by making use of the idea that there is a unique path from a node to the root of the tree. Define $s := \min\{n_e: e \in E\}$.   Consider a PIN model on a tree with $L+1$ nodes and $L$ edges. To describe the linear communication, fix some leaf node as the root, $\rho$, of the tree. For any internal node $i$ of the tree, let $E_i$ denote the edges incident with $i$, and in particular, let $e^*(i)\in E_i$ denote the edge incident with $i$ that is on the unique path between $i$ and $\rho$. Fix a positive integer $n$,  such that $n > \log_q(sL)$. We split  $\RY_e^{n} =\left[ \RX^n_{e,1} \ldots \RX^n_{e,s} \RX^n_{e,s+1}\ldots \RX^n_{e,n_e} \right] $ into two parts namely $\RY_{e[s]}^{n} =\left[ \RX^n_{e,1} \ldots \RX^n_{e,s} \right] $ and $\RY_{e[s+1,n_e]}^{n} =\left[ \RX^n_{e,1} \ldots \RX^n_{e,s} \right] $. The communication involves two parts. First part consists of communication involving $\RY_{e[s]}^{n}$. The communication from an internal node $i$ is the tuple $(  \RY^n_{e^*(i)[s]} + \RY^n_{e[s]}\MA_{i,e}: e \in E_i \setminus e^*(i))$, where $\MA_{i,e}$ is an $s \times s$ matrix. Leaf nodes do not communicate. This communication is exactly the same as that of the general path model except that it uses only $s$ random variables. The second part involves the remaining  random variables $\RY_{e[s+1,n_e]}^{n}$. Except the root node, all the other nodes communicate as follows: node $i \neq \rho$ communicates $( \RY^n_{e^*(i)[s]} \MB_{e^*(i)}+ \RY^n_{e^*(i)[s+1, n_{e^*(i)}]})$ where $ \MB_{e^*(i)}$ is an $ (n_e -s) \times s$ matrix. Number of $\Fq^n$-symbols  communicated is $ s(L-1)+\sum_{e \in E}(n_e -s)= \sum_{e \in E}n_e -s$. Thus, we have the overall communication of the form 
\begin{align*}
 \RF^{(n)} = \RY^n \MF^{(n)} 
\end{align*}
 where $\MF^{(n)} $ is a $(\sum_{e \in E}n_e) \times \left(\sum_{e \in E}n_e-s\right)$ matrix over $\bb{F}_{q^n}$ and $\RY^n = (\RY^n_e)$. The rows of $\MF^{(n)}$  correspond to the edges  of the tree.  The aim is to  choose the matrices $\MA_i$ and $\MB_i$ that achieve both perfect omniscience and perfect alignment simultaneously such that $H( \RF^{(n)}) =n \left[ \left(\sum_{e \in E}n_e\right) \log_2q  - \min_{e \in E} H(\RY_e)\right] = n  \left[ \left(\sum_{e \in E}n_e\right)  - s\right] \log_2q$.  
  
For perfect omniscience,  it is sufficient for  $\MA_i$'s to be     invertible. If $\MA_i$'s are  invertible, then as in the tree-PIN case with constant $n_e$, the nodes can recover $\RY_{[s]}^{n}$ using the first part of the communication. The partially recover source, $\RY_{[s]}^{n}$ together with the second part of the communication allows the nodes to recover the entire source $\RY^n$.  In fact, while decoding  node $i$  just uses $\RY_{e^*(i)[s]}^n$ to  attain omniscience. In other words, $\RY^n$ is recoverable from $(\RF^{(n)},  \RY_{e[s]}^n)$ for any $e \in E$. Hence $[\MF^{(n)} \mid \MH_e]$ is an invertible $\left(\sum_{e \in E}n_e\right) \times \left(\sum_{e \in E}n_e\right)$ matrix where $\MH_e$ is a block-column vector with $\bM \MI \\ \M0 \eM_{n_e \times s}$  at location corresponding to edge $e$ and zero matrix in the rest of the locations.  This shows that $\MF^{(n)}$ is a full column-rank matrix, i.e., $\rank_{\bb{F}_{q^n}}(\MF^{(n)})= \left(\sum_{e \in E}n_e\right) -s$, which implies that $H( \RF^{(n)}) =n  \left[ \left(\sum_{e \in E}n_e\right)  - s\right]  \log_2q^n$ and  the dimension of the left nullspace of $\MF^{(n)}$ is $s$.

For perfect alignment, we require that the left nullspace of $\MF^{(n)}$ is contained in  the left nullspace of $\MW^{(n)}$. So, let us construct an $\MS = (\MS_e, \MT_e)$ , where $\MS_e$ is an $s \times s$ matrix and $\MT_e$ is an $s \times (n_e-s)$ matrix such that  $ \MS\MF^{(n)}=\M0$ as follows. Let $\MS_1$ be an invertible matrix. Each edge $e$ has two nodes incident with it; let $i^*(e)$ denote the node that is closer to the root $\rho$. There is a unique path $i^*(e) = i_1 \longrightarrow i_2 \longrightarrow  \cdots \longrightarrow i_{\ell} = \rho$ that connects $i^*(e)$ to $\rho$ and let the edges along the path in this order is $(e=e_1, e_2,\ldots, e_{\ell})$. Denote the edge incident on $\rho$ by $e (\rho)$, we set $\MS_{e(\rho)}=\MS_1$, $\MS_e := (-1)^{\ell-1} \MS_1 \MA^{-1}_{i_{\ell -1}, e_{\ell-1}} \ldots \MA^{-1}_{i_{1}, e_{1}} $ for $e \neq e(\rho)$,  $\MT_{e(\rho)}=- \MS_{e(\rho)}\MB_{e(\rho)}$ and  $\MT_e := -\MS_e\MB_e$ for $e \neq e(\rho)$. Note that $\MS_e$'s are invertible and $\MS_e= - \MS_{e^{\#}}\MA^{-1}_{i^*(e), e}$, where  $e^{\#}$ is the edge adjacent to $e$ on the unique path from $i^*(e)$ to $\rho$. Let us now verify that $\MS \MF^{(n)} = \M0$. The  component corresponding to the internal node $i$ from first part of communication in $\MS \MF^{(n)}$ is of the form $(\MS_{e^*(i)} + \MS_{e}\MA_{i,e}: e \in E_i \setminus \{e^*(i)\})$. But for an  $e \in E_i \setminus \{e^*(i)\}$, $i^{*}(e) = i$ and $e^{\#} = e^*(i)$, thus $\MS_{e}\MA_{i,e} = - \MS_{e^{\#}}\MA^{-1}_{i^*(e), e}\MA_{i,e}= - \MS_{e^*(i)}\MA^{-1}_{i, e}\MA_{i,e} =- \MS_{e^*(i)}$. Hence we have  $\MS_{e^*(i)} + \MS_{e}\MA_{i,e}=\M0$. The  component corresponding to the  node $i \neq \rho$ from second part of communication in $\MS \MF^{(n)}$ is of the form $(\MS_{e^*(i)} \MB_{e^*(i)}+ \MT_{e^*(i)})$, which is $\M0$ from the choice of $\MT_{e^*(i)}$. This shows that  $\MS \MF^{(n)} = \M0$. Moreover, the dimension of the left nullspace of $\MF^{(n)}$ is $s$ and all the $s$ rows of $\MS$ are independent,  so these rows span the left nullspace of $\MF^{(n)}$. Therefore for the inclusion, we must have $\MS\MW^{(n)} =\M0$.

Finally, we can prove the existence of $\MS= (\MS_e, \MT_e)$ such that $\MS\MW^{(n)} =\M0$ and $\MS_i$'s are invertible, using the probabilistic method exactly as in the general path model. The details are omitted.  This shows the existence of a desired $\MF^{(n)}$.

\fi

\ifPAGELIMIT
\else
\section{Explicit $\rl$ Protocol in the case $n_e=1$ for all $e \in E$}\label{sec:prot}
In the proof of Theorem \ref{thm:cwsk:irred}, we have fixed the communication matrix structure and argued, using the probabilistic method, that if $n > \log_q sL$ then there exist communication coefficients  that achieve $\rl$. To that end, we first showed the existence of a realization of an $\MS$ such that $\MS\MW^{(n)}=0$ and $\MS_i$'s are invertible. Since $\MS$ and communication coefficients are recoverable from each other, the desired existence follows. However, in the case when $n_e=1$ for all $e \in E$, we give an explicit way to find these coefficients and the sufficient $n$ to do this. Here $\MS$ is just a row vector with entries from $\bb{F}_{q^n}$. Our goal is to find a vector with non-zero entries from $\bb{F}_{q^n}$  for some $n$ such that it satisfies $\MS\MW^{(n)} =\M0$.  Note that $\MW^{(n)}$ is a $\left(\sum_{e \in E} n_e\right) \times n_w$ matrix over $\bb{F}_{q^n}$ with entries $\MW^{(n)}(k,l) = \MW(k,l) \in  \bb{F}_{q}$; since $\bb{F}_{q} \subseteq \bb{F}_{q^n}$, $\MW^{(n)}(k,l) \in \bb{F}_{q^n}$. In the proof of the  following lemma, we actually show how to choose $\MS$.
 \begin{lemma}
  Let $\MW$ be an $(m+k) \times m$ matrix over $\Fq$ and $k,m \geq 1$. Assume that the columns of $\MW$ are linearly independent. If the span of the columns of $\MW$ does not contain any vector that is a scalar multiple of any standard basis vector, then there exists an $1 \times  (m+k)$ vector $\MS$ whose entries belong to $\mathbb{F}_{q^k}^{\times} := \mathbb{F}_{q^k} \backslash \{0\}$ such that $\MS\MW^{(k)}=0$.
 \end{lemma}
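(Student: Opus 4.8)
The plan is to build $\MS$ explicitly as a single $\bb{F}_{q^k}$-linear combination of an $\Fq$-basis of the left nullspace of $\MW$, taking the coefficients to be successive powers of a primitive element of $\bb{F}_{q^k}$. First I would record the dimension count: since $\MW$ is $(m+k)\times m$ with linearly independent columns, its left nullspace $N=\{y\in\Fq^{m+k}: y\MW=\M0\}$ has $\Fq$-dimension exactly $k$, and after extending scalars the left nullspace of $\MW^{(k)}$ over $\bb{F}_{q^k}$ is the $\bb{F}_{q^k}$-span of any fixed $\Fq$-basis $v_1,\dots,v_k$ of $N$. Hence the whole problem reduces to choosing scalars $\lambda_1,\dots,\lambda_k\in\bb{F}_{q^k}$ so that $\MS:=\sum_{t=1}^{k}\lambda_t v_t$ has all of its $m+k$ coordinates nonzero; the identity $\MS\MW^{(k)}=\sum_{t=1}^{k}\lambda_t\,(v_t\MW)=\M0$ then holds automatically.

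Second, I would translate the hypothesis on $\MW$ into a condition on the $v_t$'s. For each coordinate $i\in\{1,\dots,m+k\}$ the vector $c_i:=((v_1)_i,\dots,(v_k)_i)\in\Fq^{k}$ is nonzero: if it were zero then every element of $N$ would have its $i$-th coordinate equal to $0$, so the standard basis vector $e_i$ would be orthogonal to $N$, i.e.\ $e_i\in N^{\perp}$, which is precisely the column span of $\MW$; this contradicts the assumption that the column span of $\MW$ contains no nonzero scalar multiple of a standard basis vector. Consequently the $i$-th coordinate of $\MS$ equals $\sum_{t=1}^{k}\lambda_t (v_t)_i$, an $\Fq$-linear form in $(\lambda_1,\dots,\lambda_k)$ whose coefficient vector $c_i$ is nonzero, and the task becomes choosing $(\lambda_1,\dots,\lambda_k)$ off the $m+k$ hyperplanes $\{\lambda:\langle\lambda,c_i\rangle=0\}$.

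The decisive step is the choice $\lambda_t=\gamma^{t-1}$ with $\gamma$ a primitive element of $\bb{F}_{q^k}$, so that $\Fq(\gamma)=\bb{F}_{q^k}$ and hence $\gamma$ has degree exactly $k$ over $\Fq$. With this choice the $i$-th coordinate of $\MS$ becomes $p_i(\gamma)$, where $p_i(X):=\sum_{t=1}^{k}(v_t)_i\,X^{t-1}\in\Fq[X]$ is a polynomial of degree at most $k-1$ that is not identically zero (its coefficient vector is $c_i\neq\M0$). Since $\gamma$ is a root of no nonzero polynomial over $\Fq$ of degree below $k$, we get $p_i(\gamma)\neq 0$ for every $i$, so $\MS=\sum_{t=1}^{k}\gamma^{t-1}v_t$ has all its entries in $\bb{F}_{q^k}^{\times}$ and satisfies $\MS\MW^{(k)}=\M0$; this also exhibits $k$ as an explicit sufficient extension degree. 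I expect the main obstacle to be finding exactly this algebraic device: a naive union bound over the $m+k$ coordinate hyperplanes inside the $k$-dimensional space $N$ succeeds only when $q^{k}>m+k$, which may fail (for instance $q=2$ with small $k$ and large $m$), so the argument cannot rest on cardinality alone and must use the field structure of $\bb{F}_{q^k}/\Fq$. The only other place that needs care is verifying, in the second step, that non-vanishing of every $c_i$ is genuinely equivalent to the stated ``no scalar multiple of a standard basis vector'' condition.
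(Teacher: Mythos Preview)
Your proof is correct and follows essentially the same idea as the paper's: both exploit that the $k$ coefficients $\lambda_1,\dots,\lambda_k$ (your $1,\gamma,\dots,\gamma^{k-1}$, the paper's arbitrary $\Fq$-basis $\beta_1,\dots,\beta_k$ of $\bb{F}_{q^k}$) are $\Fq$-linearly independent, so that any nonzero $\Fq$-linear form in them is nonzero in $\bb{F}_{q^k}$. The only cosmetic difference is that the paper first brings $\MW$ to the echelon form $[\MI_m\mid \MA]^{\T}$ and reads off the ``no zero row in $\MA$'' condition directly, whereas you work with an arbitrary $\Fq$-basis of the left nullspace and invoke the duality $N^{\perp}=\operatorname{colspan}(\MW)$ to obtain the equivalent condition $c_i\neq\M0$.
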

 \begin{proof}
  Since the columns of $\MW$ are linearly independent, we can apply elementary column operations and row swappings on the matrix $\MW$ to reduce into the form $\tMW=[\MI_{m\times m} \mid \MA_{m \times k}]^T$, for some matrix $\MA_{m \times k}$. It means that $\tMW=\MP\MW\MC$ for some permutation matrix $\MP$ and an invetible matrix $\MC$ corresponding to the column operations. Furthermore, the matrix $\MA_{m \times k}$ has no zero rows because if there were a zero row in $\MA$ then the corresponding column of $\tMW$ is a standard basis vector which means that the columns of $\MW$ span a standard basis vector contradicting the hypothesis.
  
  Now consider the field $\mathbb{F}_{q^k}$. The condition $\MS\MW^{(k)} =0$ can be written as  $\tMS\tMW^{(k)} = 0$ where $\tMW^{(k)}=\MP\MW^{(k)}\MC$ and $\tMS=\MS\MP^{-1}$. Since $\mathbb{F}_{q^k}$ is a vector space over $\Fq$, there exists a basis  $\{\beta_1, \beta_2, \ldots, \beta_k\} \subset \mathbb{F}_{q^k}$. We will use this basis to construct $\tMS$ and hence $\MS$. For $\MA=[a_{ij}]_{i \in [m], j \in [k]}$, set $\tMS_{m+i} = \beta_i\neq 0$ for $i \in [k]$ and $\tMS_i = - \sum_{j=1} ^{k} a_{ij}\beta_j \neq 0$ for $i \in [m]$. So all entries of $\tMS$  are non-zero entries which follows from the fact that $\beta_j$'s are linearly independent and for a fixed $i$, $a_{ij}$'s are not all zero. Therefore we found an $\tMS$ such that $\tMS\tMW^{(k)} = \M0$. This in turns gives $\MS$, which is obtained by permuting the columns of $\tMS$, such that $\MS\MW^{(k)} = \M0$.
 \end{proof}
 
 In the case when $n_e=1$ and the source is irreducible, the wiretapper matrix satisfies the conditions in the hypothesis of the above lemma. Therefore, we can use the construction given in that lemma to find an $\MS$  such that $\MS\MW^{(n)}=0$ where $n = |E|-n_w$. From $\MS$, we can recover back the communication coefficients $\MA_{i,e} \in \mathbb{F}_{q^k}$ because  given all $\MS_e$ along the unique path from $i$ to the root node, we can recursively compute all $\MA_{i,e}$ along that path. 
 
 We could not extend these ideas beyond this case but it is worth finding such simple and explicit constructions in the arbitrary $n_e$ case. Another interesting question is, for a given tree-PIN source with linear wiretapper, what is the minimum $n$ required to achieve perfect omniscience and perfect alignment using a linear communication? Note that the $n$ required in our protocol is $|E|-n_w$ whereas the  probabilistic method guarantees a scheme if $n > \log_q|E|$. So we clearly see that  $n = |E|-n_w$  is not optimal in some cases. 
 
\fi

\section{Conclusion and Future direction}\label{sec:conc}
For a tree-PIN model with linear wiretapper, we have characterized minimum leakage rate for omniscience and wiretap secret key capacity. Also we showed that a linear and non-interactive scheme achieves these quantities. Moreover we constructed an explicit protocol that is optimal in the case of $n_e=1$ for all $e \in E$, but we resorted to random coding approach for the general case. It is of interest to have a deterministic coding scheme covering the general case, which is left open.  We conjecture that, for  finite linear sources, the $\wskc$ can also be obtained through secure omniscience, and a linear protocol is sufficient.  However, proving this even for a general PIN model turned out to be quite challenging.

\ifPAGELIMIT
\newpage
\fi 

\bibliographystyle{IEEEtran}
\bibliography{IEEEabrv,ref}

\ifPAGELIMIT
\else
\appendix
\section{Appendix} \label{app:nonzerodet}
\subsection{Proof of Lemma~\ref{lem:indgk}}
 Any common function (c.f.) of $\RX$ and $\RY$ is also a common function of $\RX$ and $(\RY,\RZ)$. Let $\RF$ be a c.f. of $\RX$ and $(\RY,\RZ)$ which means that $H(\RF|\RX)=0=H(\RF|\RY,\RZ)$. Note that $H(\RF|\RY)=H(\RZ|\RY)+H(\RF|\RZ,\RY)-H(\RZ|\RF,\RY)=H(\RZ)-H(\RZ|\RF,\RY)$, where the last equality uses independence of $\RZ$ and $(\RX,\RY)$. Also we have $H(\RZ|\RF,\RY) \geq H(\RZ|\RX,\RY)$ which follows from the  fact that $\RF$ is a function of $\RX$. Both these inequalities together imply that $0 \le H(\RF|\RY) \leq H(\RZ)-H(\RZ|\RX,\RY) =0$. So any c.f. of $\RX$ and $(\RY,\RZ)$ is also a c.f. of $\RX$ and $\RY$.  Therefore $\op{mcf}(\RX, (\RY,\RZ)) = \op{mcf}(\RX,\RY)$. 
 
 We can see that $ (\op{mcf}(\RX,\RY) ,\RZ)$ is a c.f. of $(\RX,\RZ)$ and $(\RY,\RZ))$. To show that $\op{mcf}((\RX,\RZ), (\RY,\RZ)) = (\op{mcf}(\RX,\RY), \RZ)$, it is enough to show that $H(\op{mcf}(\RX,\RY) ,\RZ) \geq H(\RG)$ for any $\RG$ satisfying $H(\RG|\RX,\RZ)=0=H(\RG|\RY,\RZ)$. Since $\sum_{\Rz \in \mc{Z}}P_{\RZ}(\Rz) H(\RG|\RX,\RZ=\Rz)=H(\RG|\RX,\RZ)=0$, for a $\Rz \in \op{supp}(P_{\RZ})$,  we have $H(\RG|\RX,\RZ=\Rz)=0$. Similarly, $H(\RG|\RY,\RZ=\Rz)=0$. Thus, for a fixed $\RZ =\Rz$, $\RG$ is a c.f.\ of rvs $\RX$ and $\RY$ jointly distributed according to $P_{\RX, \RY \mid \RZ=\Rz}$. In this case, let $\op{mcf}(\RX,\RY)_{\RZ=\Rz}$ to denote the m.c.f. which indeed depends on the conditional distribution.  Because of the independence, $P_{\RX, \RY \mid \RZ=\Rz} =P_{\RX, \RY}$, however, the $\op{mcf}(\RX,\RY)_{\RZ=\Rz}$ remains same across all $\Rz$, and is equal to  $\op{mcf}(\RX,\RY)$. Therefore, from the optimality of m.c.f., we have $H(\RG |\RZ=\Rz) \leq  H(\op{mcf}(\RX,\RY)_{\RZ=\Rz} |\RZ=\Rz)=H(\op{mcf}(\RX,\RY) |\RZ=\Rz)= H(\op{mcf}(\RX,\RY))$, where the last equality follows from the independence of $\RZ$ and $(\RX,\RY)$. As a consequence, we have $H(\RG |\RZ) =\sum_{\Rz \in \mc{Z}}P_{\RZ}(\Rz) H(\RG|\RZ=\Rz)\leq H(\op{mcf}(\RX,\RY))$. The desired inequality follows from $H(\RG) \leq H(\RG,\RZ) =H(\RG |\RZ) + H(\RZ) \leq  H(\op{mcf}(\RX,\RY)) + H(\RZ) =H(\op{mcf}(\RX,\RY),\RZ)$.  This proves that $\op{mcf}((\RX,\RZ), (\RY,\RZ)) = (\op{mcf}(\RX,\RY), \RZ)$.

 \subsection{Useful Lemmas related to the proof of Theorem~\ref{thm:cwsk:irred}}
\begin{lemma}[Schwartz-Zippel lemma]\label{lem:sz}
Let $\op{P}(\RX_1,\ldots,\RX_n)$ be a non-zero polynomial in $n$ variables with degree $d$ and coefficients from a finite field $\Fq$. Given a non-empty set $S \subseteq \Fq$, if we choose the $n$-tuple $(\RMx_1, \ldots, \RMx_n)$ uniformly from $S^n$, then
\begin{align*}
\Pr \{(\RMx_1, \ldots, \RMx_n)\in S^n: \op{P}(\RMx_1, \ldots, \RMx_n) = 0\} \leq \frac{d}{|S|}.
\end{align*}
\end{lemma}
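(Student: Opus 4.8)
The plan is to prove this by induction on the number of variables $n$. The base case $n=1$ is the familiar fact that a nonzero univariate polynomial over a field has at most as many roots as its degree: $\op{P}(\RX_1)$ has at most $d$ zeros in $\Fq$, hence at most $d$ zeros in $S$, so a uniformly chosen $\RMx_1 \in S$ is a root with probability at most $d/|S|$.

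For the inductive step I would fix a nonzero $\op{P}$ of total degree $d$ in $n \ge 2$ variables and write it as a polynomial in $\RX_1$ over the ring $\Fq[\RX_2,\ldots,\RX_n]$, say $\op{P} = \sum_{i=0}^{k}\RX_1^{\,i}\,\op{P}_i(\RX_2,\ldots,\RX_n)$, where $k \le d$ is the highest power of $\RX_1$ that actually occurs, so that $\op{P}_k \not\equiv 0$ and $\deg \op{P}_k \le d-k$. Sampling $(\RMx_2,\ldots,\RMx_n)$ uniformly from $S^{n-1}$, the induction hypothesis applied to $\op{P}_k$ gives $\Pr[\op{P}_k(\RMx_2,\ldots,\RMx_n)=0] \le (d-k)/|S|$. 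Conditioned on the complementary event that $\op{P}_k(\RMx_2,\ldots,\RMx_n)\neq 0$, the specialization $\op{P}(\RX_1,\RMx_2,\ldots,\RMx_n)$ is a nonzero univariate polynomial in $\RX_1$ of degree exactly $k$, so by the base case the conditional probability that $\op{P}(\RMx_1,\RMx_2,\ldots,\RMx_n)=0$ is at most $k/|S|$; here I would use that $\RMx_1$ is independent of $(\RMx_2,\ldots,\RMx_n)$ and uniform on $S$.

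Combining these two estimates by the law of total probability (bounding the probability of the ``bad'' event $\{\op{P}_k = 0\}$ crudely by $(d-k)/|S|$ and the remaining contribution by $k/|S|$) yields
\[
  \Pr[\op{P}(\RMx_1,\ldots,\RMx_n)=0] \;\le\; \frac{d-k}{|S|} + \frac{k}{|S|} \;=\; \frac{d}{|S|},
\]
completing the induction. I do not anticipate a real obstacle here: the only point needing a word of care is the degenerate case $k=0$, where $\RX_1$ does not appear, so that $\op{P}=\op{P}_0$ is already a nonzero degree-$d$ polynomial in $n-1$ variables and the induction hypothesis applies directly; and one should note that finiteness of $\Fq$ is irrelevant beyond the finiteness of $S$, so the same argument works verbatim over an arbitrary field.
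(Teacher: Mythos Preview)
Your argument is the standard, correct induction proof of the Schwartz--Zippel lemma. The paper does not actually supply a proof of this lemma: it is stated in the appendix as a known auxiliary result and invoked directly in the achievability argument, so there is nothing to compare against. Your write-up is fine as a self-contained proof; the only cosmetic remark is that the $k=0$ case need not be singled out, since the bound $(d-k)/|S| + k/|S| = d/|S|$ already covers it.
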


Fix two positive integers $m$ and $s$ such that $s\leq m$. Consider the integral domain $\Fq\left[\RX_{11}, \ldots ,\RX_{1m},\ldots, \RX_{s1}, \ldots ,\RX_{sm}\right]$, which is the set all multivariate polynomials in indeterminates $ \RX_{11}, \ldots ,\RX_{1m},\ldots, \RX_{s1}, \ldots ,\RX_{sm}$ with coefficients from a finite field, $\Fq$. Let us consider a matrix of the form
\begin{align}
\MM=\begin{bmatrix}
\op{L_1}(\RY_1)&\op{L_2}(\RY_1)&\cdots &\op{L_s}(\RY_1) \\
\op{L_1}(\RY_2)&\op{L_2}(\RY_2)&\cdots &\op{L_s}(\RY_2) \\
\vdots & \vdots & \ddots & \vdots\\
\op{L_1}(\RY_s)&\op{L_2}(\RY_s)&\cdots &\op{L_s}(\RY_s)
\end{bmatrix}_{s \times s}, \label{eqn:detmatrix}
\end{align}
where $\RY_k:=[\RX_{k1}, \ldots ,\RX_{km}]$ for $1 \leq k \leq s$ and $\op{L}(\RY_k)$  denotes a linear combination of indeterminates $ \RX_{k1}, \ldots ,\RX_{km}$ over $\Fq $. Note that row $k$ depends only on $\RY_k$.  Let  $\RX := [\RY^T_1, \ldots, \RY^T_s]^T$ and $\op{P}(\RX)$ denotes a polynomial in indeterminates $ \RX_{11}, \ldots ,\RX_{1m},\ldots, \RX_{s1}, \ldots ,\RX_{sm}$ with coefficients from $\Fq$. 
 
It is a fact \cite[p.~528]{bourbaki1989algebra} that  for a general matrix $\MM$ with entries from $\Fq\left[X\right]$, $\det(\MM)=0$ if and only if  there exist polynomials $\op{P_k}(\RX) $, $1 \leq k \leq s$, not all zero  such that
\begin{align*}
\MM\left[ \op{P_1}(\RX)  , \ldots ,  \op{P_s}(\RX) \right]^T= \M0.
\end{align*}
But this does not guarantee a non-zero $\lambda = [\lambda_1, \ldots, \lambda_s]  \in \Fq^s$ such that $ \MM \lambda^T= 0$.  However the following lemma shows that if the matrix is of the form  (\ref{eqn:detmatrix}), then this is the case.

\begin{lemma} \label{lem:det}
Let $\MM$ be matrix of the form (\ref{eqn:detmatrix}). Then  $\det(\MM)=0$ iff  there exists a  non-zero $\lambda = [\lambda_1, \ldots, \lambda_s]  \in \Fq^s$ such that $\MM \lambda^T= 0$.
\end{lemma}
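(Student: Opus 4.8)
The plan is to exploit the special multiplicative structure hidden in $\MM$. Writing $\op{L_j}(\RY_k)=\sum_{i=1}^m c_{ji}\RX_{ki}$ for the fixed coefficients $c_{ji}\in\Fq$ of the linear form $\op{L_j}$, one sees at once that $\MM$ factors as $\MM=\MU\MC$, where $\MU$ is the $s\times m$ matrix of indeterminates with $(\MU)_{ki}=\RX_{ki}$ and $\MC$ is the $m\times s$ matrix over $\Fq$ whose $j$-th column is $(c_{j1},\dots,c_{jm})^{T}$. Both implications will be read off from this factorization, using the hypothesis $s\le m$.

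The implication $(\Leftarrow)$ is immediate and needs nothing special about $\MM$: if $\MM\lambda^{T}=\M0$ for some nonzero $\lambda\in\Fq^{s}$, multiplying on the left by $\adj(\MM)$ gives $\det(\MM)\,\lambda^{T}=\M0$ in the integral domain $\Fq[\RX]$, and since $\lambda\neq\M0$ this forces $\det(\MM)=0$.

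For the implication $(\Rightarrow)$, I would apply the Cauchy--Binet formula to $\MM=\MU\MC$ (valid because $s\le m$): $\det(\MM)=\sum_{S}\det(\MU_{S})\det(\MC_{S})$, the sum being over $s$-element subsets $S\subseteq[m]$, with $\MU_{S}$ the $s\times s$ submatrix of $\MU$ on the columns indexed by $S$ and $\MC_{S}$ the $s\times s$ submatrix of $\MC$ on the rows indexed by $S$. The crucial point is that every monomial of $\det(\MU_{S})$ has the form $\prod_{k=1}^{s}\RX_{k,\tau(k)}$ for a bijection $\tau\colon[s]\to S$, so the set of second indices occurring in it equals $S$; hence for distinct $S$ the polynomials $\det(\MU_{S})$ have pairwise disjoint monomial supports and are in particular $\Fq$-linearly independent. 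Since each $\det(\MC_{S})$ is a scalar in $\Fq$, the equality $\det(\MM)=0$ in $\Fq[\RX]$ forces $\det(\MC_{S})=0$ for every such $S$. This says that all $s\times s$ minors of $\MC$ vanish, i.e.\ $\rank_{\Fq}(\MC)<s$; as $\MC$ has exactly $s$ columns, its columns are $\Fq$-linearly dependent, so there is a nonzero $\lambda=[\lambda_{1},\dots,\lambda_{s}]\in\Fq^{s}$ with $\MC\lambda^{T}=\M0$, whence $\MM\lambda^{T}=\MU(\MC\lambda^{T})=\M0$.

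I expect the main obstacle to be pinning down the linear-independence step cleanly, namely the claim that $\{\det(\MU_{S})\}_{S}$ is an $\Fq$-linearly independent family in $\Fq[\RX]$; this is precisely the property that a general matrix over $\Fq[\RX]$ lacks (which is why the Bourbaki fact recalled above is insufficient), and it is exactly here that the structural assumption ``row $k$ depends only on its own block $\RY_{k}$'' is used. Everything else---Cauchy--Binet, the adjugate identity, and the equivalence between vanishing of all $s\times s$ minors and rank being less than $s$---is routine and can be dispatched with a sentence each.
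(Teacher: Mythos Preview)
Your proof is correct and follows essentially the same approach as the paper: both factor $\MM=\MU\MC$ (the paper writes $\MX\MA$) with $\MU$ the $s\times m$ indeterminate matrix and $\MC\in\Fq^{m\times s}$, then use the distinctness of the monomials $\prod_k \RX_{k,j_k}$ to force all $s\times s$ minors of $\MC$ to vanish, giving $\rank_{\Fq}(\MC)<s$ and hence the desired $\lambda$. The only cosmetic difference is that you invoke Cauchy--Binet to reach the minor expansion, whereas the paper carries out the Leibniz expansion and regroups by hand; your packaging is arguably cleaner but the substance is identical.
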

\begin{proof}
The "if" part holds for any matrix $\MM$ by the fact stated above. 
For the "only if" part, suppose that $\det(\MM)=0$.  We can write $\MM$ as follows
\[\MM=\underbrace{\begin{bmatrix}
\RX_{11}&\RX_{12}&\cdots &\RX_{1m} \\
\RX_{21}&\RX_{22}&\cdots &\RX_{2m} \\
\vdots & \vdots & \ddots & \vdots\\
\RX_{s1}&\RX_{s2}&\cdots &\RX_{sm} 
\end{bmatrix}}_{=\MX}\underbrace{\begin{bmatrix}
a_{11}&a_{21}&\cdots &a_{s1} \\
a_{12}&a_{22}&\cdots &a_{s2} \\
a_{13}&a_{23}&\cdots &a_{s3} \\
\vdots & \vdots & \ddots & \vdots\\
a_{1m}&a_{2m}&\cdots &a_{sm} 
\end{bmatrix}}_{:=\MA}.\]
 for some $\MA \in \Fq^{m \times s}$.  
Now consider the determinant of the matrix $\MM$,
\begin{align*}
&\det(\MM) = \sum_{\sigma \in S_s} \sgn(\sigma )\op{L_{\sigma(1)}}(\RY_1)\ldots \op{L_{\sigma(s)}}(\RY_s)\\
&=\sum_{\sigma \in S_s}\sgn(\sigma ) \left( \sum_{j_1=1}^{m}a_{\sigma(1)j_1} \RX_{1j_1}\right)\ldots \left( \sum_{j_s=1}^{m}a_{\sigma(s)j_s} \RX_{sj_s}\right)\\
&= \sum_{\sigma \in S_s}\sgn(\sigma )\sum_{j_1,\ldots,j_s \in [m]^s}\left(a_{\sigma(1)j_1}\ldots a_{\sigma(s)j_s}\right)\RX_{1j_1}\ldots \RX_{sj_s}\\
& \stackrel{(a)}{=} \sum_{j_1,\ldots,j_s\in [m]^s}\left(\sum_{\sigma \in S_s}\sgn(\sigma )a_{\sigma(1)j_1}\ldots a_{\sigma(s)j_s}\right)\RX_{1j_1}\ldots \RX_{sj_s}\\
&\stackrel{(b)}{=} \sum_{j_1,\ldots,j_s\in [m]^s} \det(A_{j_1\ldots j_s})\RX_{1j_1}\ldots \RX_{sj_s} 
\end{align*}
where $\MA_{j_1j_2\ldots j_s}$ is the $s\times s$ submatrix of $\MA$ formed by the rows $j_1, j_2, \dots ,j_s$. $(a)$ follows from the fact that the monomials $\RX_{1j_1}\RX_{2j_2}\ldots \RX_{sj_s}$, for $j_1,j_2,\ldots,j_s\in [m]^s$, are distinct. $(b)$ holds because the inner sum is just the determinant of $\MA_{j_1j_2\ldots j_s}$. Since $\det(\MM)=0$, $\det(\MA_{j_1j_2\ldots j_s})= 0$ for every distinct indices $j_1,j_2,\dots,j_s$ which implies that  any $s$ rows of $\MA$ are linearly dependent over $\Fq$. This shows that the rank$_{ \Fq}(\MA) < s$, therefore the columns of $\MA$ are linearly dependent over $\Fq$. Hence there exists a  non-zero $\lambda = [\lambda_1, \ldots, \lambda_s]  \in \Fq^n$ such that $ \MA\lambda^T= 0 \Rightarrow \MM\lambda ^T= 0$.
\end{proof}

\begin{definition}
 Let $\MW$ be a row-partitioned matrix of the form  
 \begin{align} \label{blockcolumn}
 \renewcommand{\arraystretch}{1.5}
 \begin{bmatrix}
 \begin{array}{c}
  \MW_1\\ \hline
  \MW_2\\ \hline
  \vdots\\ \hline
  \MW_{|E|}
  \end{array}
  \end{bmatrix}
 \end{align}
where $\MW_i$ is a $n_i \times n_w$ matrix over $\Fq$. Then we say that the matrix $\MW$ is \emph{reducible} if there exist an index $i$ and a non-zero row vector $r_i$ in $\Fq^{(n_i)}$ such that column span of $\MW$ contains the column vector $[-0- \mid  \cdots \mid - r_i-\mid \cdots \mid -0-]^T$. If the matrix $\MW$ is not reducible then we say it is \emph{irreducible}
\end{definition}
A tree-PIN source with linear wiretapper is irreducible iff  the wiretapper matrix $\MW$ is irreducible.
\begin{lemma} \label{lem:upbdirred}
 Given a $(\sum_{e \in E} n_e) \times  n_w$ wiretapper  matrix $\MW$ in the row-partitioned form \eqref{blockcolumn}. If the matrix is irreducible then $n_w  \leq (\sum_{e \in E}n_e)-s$ where $s=\min\{n_e: e \in E\}$. 
\end{lemma}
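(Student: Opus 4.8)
The plan is a short linear-algebra dimension count, using only the fact that the wiretapper matrix has full column rank (part of the standing definition of the wiretapper matrix) together with the definition of irreducibility given just above. Write $N := \sum_{e \in E} n_e$ for the number of rows of $\MW$, and let $C \subseteq \Fq^{N}$ denote the column span of $\MW$; since the columns of $\MW$ are linearly independent, $\dim C = n_w$. For each edge $e \in E$ let $B_e \subseteq \Fq^{N}$ be the coordinate subspace consisting of those column vectors whose blocks other than the $e$-th are zero, so that $\dim B_e = n_e$. Unwinding the definition of reducibility relative to the row-partition \eqref{blockcolumn}: $\MW$ is reducible if and only if $C \cap B_e \neq \{0\}$ for some $e$, because a nonzero vector of $C$ lying in $B_e$ is exactly a witness of the form $[\,0 \mid \cdots \mid r_e \mid \cdots \mid 0\,]^T$ with $r_e \neq 0$. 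Hence irreducibility of $\MW$ is equivalent to $C \cap B_e = \{0\}$ for every $e \in E$.

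Next I would fix an edge $e^{*}$ attaining the minimum, $n_{e^{*}} = s = \min_{e \in E} n_e$. From $C \cap B_{e^{*}} = \{0\}$ the sum $C + B_{e^{*}}$ is direct inside $\Fq^{N}$, so $n_w + s = \dim C + \dim B_{e^{*}} = \dim(C + B_{e^{*}}) \le N = \sum_{e \in E} n_e$, which rearranges to $n_w \le \bigl(\sum_{e \in E} n_e\bigr) - s$, as claimed. (In fact the same argument gives the slightly stronger $n_w \le N - n_e$ for every edge $e$, and the lemma is the instance $e = e^{*}$.)

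I do not expect a genuine obstacle here: the argument invokes irreducibility only with respect to the single smallest block, and the one thing to be careful about is that the full-column-rank hypothesis is essential — it is what makes $\dim C = n_w$ rather than merely $\dim C \le n_w$ — and the conclusion genuinely fails without it (e.g. the all-zero matrix is irreducible by the above characterization yet can have $n_w = N$). Everything else is a routine dimension count, so the write-up should be only a few lines.
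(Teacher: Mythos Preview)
Your argument is correct and is cleaner than the paper's. The paper proceeds by applying elementary column operations and block-row permutations to bring $\MW$ into a block lower-triangular shape with full-rank diagonal blocks $\MW_{11},\ldots,\MW_{kk}$; irreducibility is then invoked to argue that at most $|E|-1$ of the block rows can carry a diagonal pivot, so $n_w$ is bounded by $\sum_{j\in K} n_{e_j}$ for some $K$ with $|K|\le |E|-1$, giving $n_w \le \sum_e n_e - s$. You bypass the reduction entirely: rephrasing irreducibility as $C\cap B_e=\{0\}$ for every $e$ lets you read off the bound from the single identity $\dim(C+B_{e^*})=\dim C+\dim B_{e^*}\le N$. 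Besides being shorter, your route delivers the genuinely stronger conclusion $n_w\le N-\max_e n_e$ (apply it with the largest block rather than the smallest), whereas the paper's maximization over subsets $K$ of size $|E|-1$ naturally discards only the smallest block. Your remark that the full-column-rank hypothesis is essential is also on point; both proofs use it in the same place, to identify $\dim C$ with $n_w$.
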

\begin{proof}
 By  elementary column operations and block row swapping, we can reduce $\MW$ into the following form
 \begin{align*}
 \renewcommand{\arraystretch}{1.5}
 \left[\begin{array}{cccc}
 \MW_{11}&\M0& \cdots&\M0\\ \hline
 \MW_{21}&\MW_{22}& \cdots &\M0\\ \hline
 \vdots&\vdots&\ddots&\vdots\\ \hline
  \MW_{k1}&\MW_{k2}& \cdots &\MW_{kk}\\ \hline
  \vdots&\vdots&\ddots&\vdots\\ \hline
  \MW_{|E|1}&\MW_{|E|2}& \cdots &\MW_{|E|k}\\
 \end{array}\right]
\end{align*}
 where the diagonal matrices $\MW_{jj}$ are full-row rank matrices. The  upper bound on $k$ is $(|E|-1)$, because of the irreducibility. The upper bound on  the number of columns in $\MW_{jj}$ is  $n_{e_j}$, where $e_j$ is the edge corresponding to the row $j$ (after block row swapping). So, 
 \begin{align*}
  n_w & \leq \max\biggl\{\sum_{j\in K} n_{e_j}: K \subseteq [|E|], |K| \leq (|E|-1)\biggr\} \\
  &\leq \max\biggl\{\sum_{j\in K} n_{e_j}:  |K| = (|E|-1)\biggr\}\\
  &= \max\biggl\{\sum_{e\in E} n_e - n_{e'}:  e' \in E\biggr\}\\
  &=\sum_{e\in E} n_e - s
 \end{align*}

\end{proof}

\begin{lemma}\label{lem:nonzeropoly}
 Given a $(\sum_{e \in E} n_e) \times  n_w$ wiretapper  matrix $\MW$ with full column rank such that $n_w  \leq (\sum_{e \in E}n_e)-s$ where $s=\min\{n_e: e \in E\}$. Let $\MS\MW =0$, where $\MS=(\MS_e, \MT_e)_{e \in E}$, where $\MS_e$ is an $s \times s$ matrix and $\MT_e$ is an $s \times (n_e -s)$ matrix. Then  if $\MW$ is irreducible then $\prod_{e\in E} \det (\MS_e)$ is a non-zero polynomial (Polynomial in terms of the inderterminates corresponding to the free variables of $\MS$  corresponding to $\MS\MW=0$). 
\end{lemma}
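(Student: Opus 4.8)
Write $N := \sum_{e\in E}n_e$. The plan is to make the solution set of $\MS\MW = \M0$ explicit and reduce the claim to a rank statement about a fixed basis of the left nullspace of $\MW$. Each row of $\MS$ is a row vector $w$ with $w\MW = \M0$, i.e.\ an element of the left nullspace $\mathcal{L}$ of $\MW$; since $\MW$ has full column rank $n_w$ we have $\dim_{\Fq}\mathcal{L} = N - n_w =: m$, and the hypothesis $n_w \le N - s$ gives $m \ge s$. Fixing a matrix $\MN\in\Fq^{m\times N}$ whose rows form a basis of $\mathcal{L}$, every solution of $\MS\MW = \M0$ is $\MS = \MA\MN$ with $\MA = (\Rs_{i,j})_{1\le i\le s,\,1\le j\le m}$ a matrix of free entries, and the entries of $\MS$ are related to these $sm$ variables by an invertible linear change of coordinates (this is exactly the normalization ``the first $m$ columns of $\MS$ carry the free variables'' used in the proof of Theorem~\ref{thm:cwsk:irred}). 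It therefore suffices to show that $\prod_{e\in E}\det(\MS_e)$ is not the zero polynomial in the $\Rs_{i,j}$, where $\MS_e = \MA\MN_e$ and $\MN_e\in\Fq^{m\times s}$ is the submatrix of $\MN$ formed by the columns indexed by the first $s$ coordinates of the block of $e$.

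Since $\Fq[\Rs_{1,1},\dots,\Rs_{s,m}]$ is an integral domain, it is enough to prove each factor $\det(\MS_e)$ is a non-zero polynomial. The $i$-th row of $\MS_e = \MA\MN_e$ is a vector of $\Fq$-linear forms in the indeterminates $\Rs_{i,1},\dots,\Rs_{i,m}$ of the $i$-th row of $\MA$ alone, so $\MS_e$ is a matrix of the form~\eqref{eqn:detmatrix}; hence Lemma~\ref{lem:det} applies and says that $\det(\MS_e)$ is the zero polynomial iff there is a non-zero $\lambda\in\Fq^s$ with $\MS_e\lambda^{\T} = \M0$ identically, i.e.\ (the entries of $\MA$ being algebraically independent) with $\MN_e\lambda^{\T} = \M0$. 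Equivalently, $\det(\MS_e)\not\equiv 0$ iff $\MN_e$ has full column rank $s$ over $\Fq$, so the lemma reduces to showing $\rank_{\Fq}(\MN_e) = s$ for every $e\in E$; this is the step that uses irreducibility. The row space of $\MN_e$ is $\pi_e(\mathcal{L})$, the image of $\mathcal{L}$ under the coordinate projection $\pi_e$ onto the first $s$ coordinates of the block of $e$. If $\rank(\MN_e) < s$, then $\pi_e(\mathcal{L})$ is a proper subspace of $\Fq^{1\times s}$, so there is a non-zero row vector $r\in\Fq^{1\times N}$ supported on those coordinates with $\langle r, w\rangle = 0$ for every $w\in\mathcal{L}$, where $\langle u,v\rangle := uv^{\T}$. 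Since $\mathcal{L}$ is the orthogonal complement of the column space of $\MW$ and this bilinear form is non-degenerate, $r^{\T}$ lies in the column space of $\MW$; being a non-zero column of the column space of $\MW$ supported inside the block of the single edge $e$, this makes $\MW$ \emph{reducible} in the sense of the definition after~\eqref{blockcolumn}, contradicting the hypothesis. Hence $\rank_{\Fq}(\MN_e) = s$ for all $e$, and the lemma follows.

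I expect the only delicate point is the duality bookkeeping in the last step: identifying $\mathcal{L}$ with the orthogonal complement of the column space of $\MW$, using $(\,\cdot\,)^{\perp\perp} = (\,\cdot\,)$, and matching the resulting vector against the row-partitioned notion of reducibility from~\eqref{blockcolumn}. Everything else is a direct invocation of Lemma~\ref{lem:det} plus the fact that a product of non-zero polynomials over an integral domain is non-zero. Both hypotheses are used once: $n_w \le N - s$ makes $m \ge s$, so that $\MN_e$ can have full column rank at all, while irreducibility of $\MW$ forces it to.
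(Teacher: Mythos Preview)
Your proposal is correct and follows essentially the same route as the paper's proof: both invoke Lemma~\ref{lem:det} to convert $\det(\MS_e)\equiv 0$ into the existence of a non-zero $\lambda\in\Fq^s$ with $\MS_e\lambda^{\T}=0$ identically, then use duality (orthogonality to the left nullspace of $\MW$) to place the resulting block-supported vector in the column span of $\MW$, contradicting irreducibility. Your version makes the parametrization $\MS=\MA\MN$ and the rank condition $\rank_{\Fq}(\MN_e)=s$ explicit, whereas the paper argues the same duality via the ad~hoc comparison $\ker(\tMW^{\T})=\ker(\MW^{\T})$ with $\tMW=[\MW\mid \MR^{\T}]$; these are equivalent formulations of the same step.
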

\begin{proof}
 Suppose  $\prod_{e\in E} \det (\MS_e)$ is a zero polynomial then $ \det (\MS_i) \equiv 0$ for some $i\in E$. Let $m := \sum_{e \in E}n_e - n_w $, it follows from lemma \ref{lem:upbdirred} that $m \geq s$. Since $\MS$ satisfies $\MS\MW = 0$, in each row of $\MS$ there are $m$ independent variables, which are indeterminates, and every other element in the row is expressed  as  a linear combination of these indeterminates. So, in total there $sm$ indeterminates in $S$ , without loss of generality, assume them to be in the first $m$ columns of $\MS$. Now $\MS_i$ has the form similar to $\eqref{eqn:detmatrix}$ for some linear functions. From lemma \ref{lem:det}, $ \det (\MS_i) \equiv 0$ implies that there exists a non-zero $\lambda = [\lambda_1, \ldots, \lambda_s]  \in \Fq^s$ such that $ \MS_i \lambda^T= 0$. Consider the block column partitioned  row vector $\MR$ such that the block corresponding to the edge $i$ is  $\MR_i = [\lambda_1, \ldots, \lambda_s, 0,\ldots, 0]$ and  $\MR_j =[- 0 -]$. Then $ \MS \MR^T= 0$. Consider the matrix $\tMW = [\MW \mid \MR^T]$ which also satisfies $\MS \tMW =\M0$. One can see that $\ker(\tMW^T) \subseteq \ker(\MW^T) $. For the other direction, note that any vector in the $\ker(\MW^T)$ also belongs to $\ker(\MR^T)$. As a consequence $\ker(\tMW^T) = \ker(\MW^T) $, then the dimension of the column space of $\tMW$ is $\sum_{e\in E} n_e  - \dim(\ker(\tMW^T)) =\sum_{e\in E} n_e  - \dim(\ker(\tMW^T)) = n_w$. Hence $\MR^T$ is in the column span of $\MW$ which implies that $\MW$ is reducible.  \end{proof}

\fi

\end{document}